\documentclass[cleveref, autoref, thm-restate, english]{lipics-v2021}

\usepackage{hchang-socg}
\nolinenumbers

\newcommand{\HH}{H}
\newcommand{\SP}{\pi}

\newcommand{\dist}{d}
\newcommand{\cell}{\mathit{Cell}}
\newcommand{\hit}{\mathit{hit}}
\newcommand{\opt}{\mathrm{opt}}
\newcommand{\pab}[1][v]{\pi_{\alpha,\beta}(#1)}
\newcommand{\base}{\mathit{base}}
\newcommand{\resident}{\mathit{resident}}

\newcounter{ctr}
\loop
 \stepcounter{ctr}
 \expandafter\edef\csname c\Alph{ctr}\endcsname{\noexpand\mathcal{\Alph{ctr}}}
\ifnum\thectr<26
\repeat

\title{Single-Criteria Metric $r$-Dominating Set Problem via Minor-Preserving Support} 

\author{Reilly Browne}{Dartmouth College, NH, USA  \and \url{http://www.myhomepage.edu} }{reilly.browne.gr@dartmouth.edu}{https://orcid.org/0000-0003-3725-5245}{Supported by the U.S. National
Science Foundation CAREER Award under the Grant No. CCF-2443017}

\author{Hsien-Chih Chang}{Dartmouth College, NH, USA  \and \url{http://www.myhomepage.edu} }{hsien-chih.chang@dartmouth.edu}{https://orcid.org/0000-0001-6714-7988}{Supported by the U.S. National
Science Foundation CAREER Award under the Grant No. CCF-2443017}

\authorrunning{Browne and Chang} 

\Copyright{Reilly Browne and Hsien-Chih Chang} 

\ccsdesc[500]{Theory of computation~Computational geometry}
\ccsdesc[300]{Mathematics of computing~Graph theory}
\keywords{Minimum dominating set, planar graphs, shallow cell complexity} 

\category{} 

\relatedversiondetails[linktext={https://doi.org/10.5281/zenodo.19271844}]{Full Version}{https://doi.org/10.5281/zenodo.19271844} 

\hideLIPIcs

\begin{document}

\maketitle

\begin{abstract}
Given an unweighted graph $G$, the \emph{minimum $r$-dominating set problem} asks for a subset of vertices $S$ of the smallest cardinality, such that every vertex in $G$ is within radius $r$ to some vertex in~$S$.
While the $r$-dominating set problem on planar graph admits PTAS from Baker's shifting/layering technique when $r$ is a constant, the problem becomes significantly harder when $r$ can depend on $n$.
In fact, under Exponential-Time Hypothesis, Fox-Epstein \etal~[SODA 2019] observed that no efficient PTAS can exist for the unbounded $r$-dominating set problem on planar graphs.
One may consider even harder 
weighted-variant known as the \emph{vertex-weighted metric $r$-dominating set}, where edges are associated with lengths, and every vertex is associated with a positive-valued weight, and the goal is to compute an $r$-dominating set with minimum total weight.
As a result, people resorted to \emph{bicriteria} algorithms by allowing the returned solution to use radius-$(1+\e)r$ balls instead, in addition to the total weight being a $1+\e$ approximation to the optimal value.

We establish the first \emph{single-criteria} polynomial-time $O(1)$-approximation algorithm for the vertex-weighted metric $r$-dominating set problem on planar graphs when $r$ is part of the input, and can be arbitrarily large compared to $n$.
Our new (single-criteria) $O(1)$-approximation algorithm uses the quasi-uniformity sampling technique of Chan \etal~[SODA 2012] by bounding the \emph{shallow cell complexity} of the (unbounded) radius-$r$ ball system to be linear in $n$.
To this end we have two technical innovations:
\begin{enumerate}
\item The discrete ball system on planar graphs are neither pseudodisks nor have well-defined boundaries for standard union-complexity arguments.
We construct a \emph{support graph} for arbitrary distance ball systems as contractions of Voronoi cells; the sparseness comes as a byproduct.
\item
We present an assignment of each depth-($\geq\! 3$) cell to a unique 3-tuple of ball centers. This allows us to use standard Clarkson-Shor techniques to reduce the counting to cells of depth \emph{exactly} 3, which we prove to be size $O(n)$ by a novel geometric argument based on our support being a Voronoi contraction.
\end{enumerate}
\end{abstract}

\newpage
\section{Introduction}

The concept of \emph{distance balls}
is of fundamental importance to any given metric space $(X,\dist)$ as the concept of \emph{open sets} to any topological space.
A \EMPH{distance ball $B(v,r)$} in a metric space $(X,\dist)$ with radius $r$ centered at point $v$ is the subset of points $\set{u \in X : \dist(u,v) \le r}$ in $X$.
Many of the algorithmic success on graphs can be attributed to new understandings of the structure of distance balls in the corresponding graph metric.

\paragraph{The role of distance balls.}
Take the \emph{minimum dominating set}---one of the 21 famous NP-complete problems of Karp---as an example.  
Given an unweighted graph $G$, the \EMPH{minimum dominating set problem} (\textsc{Dominating Set} for short) asks for a subset of vertices $S$ of smallest cardinality, such that every vertex in $G$ is either in $S$, or is adjacent to some vertex in $S$.
\textsc{Dominating Set} can also be interpreted as an instance of the \emph{set cover problem}:
Consider the set $\cR \coloneqq \set{B(v,1) : v \in V_G}$ of 1-neighborhood balls.
Observe that a set $S$ is dominating in $G$ if and only if every vertex is covered by the subset 
of 1-neighborhood balls in $\cR$ centered in $S$.
In other words, a minimum solution of the set-cover instance $(V_G,\cR)$ is a minimum dominating set in $G$.
This simple observation immediately implies an $O(\log n)$-approximation to \textsc{Dominating Set}, where $n$ is the number of nodes in $G$, by the standard greedy set cover algorithm.
$O(\log n)$-approximation is the best possible (under popular beliefs in complexity) if we make no further assumptions on the graph~\cite{fei-tlasc-1998,cc-ahdsp-2008}; this is essentially because the interaction between 1-neighborhood balls in general graphs can be complex and unruly.

What if we consider special classes of graphs with additional geometry constraints?  
One of the most well-studied examples is \EMPH{planar graphs}, graphs with a drawing in the plane where no two edges cross each other.
Planar graphs exhibit many helpful traits for algorithm design; particularly, several partitioning tools exist, including \emph{separators}~\cite{lt-stpg-1979,mil-fsscs-1986,tho-corad-2004}, \emph{low-diameter decompositions}~\cite{kpr-emndm-1993,bar-pamsi-1996,blt-scpgg-2014}, and an algorithmic paradigm now famously known as \emph{Baker's shifting/layering technique}~\cite{bak-aanpp-1994} which utilizes diameter-treewidth property~\cite{epp-dtmgf-2000} and the decomposability of planar graphs into $k$-outerplanar graphs.  
The idea is to compute a BFS-tree from an arbitrary node, then focus only on a few consecutive layers at a time.
If the optimization problem at hand is \emph{local} (for \textsc{Dominating Set}, node coverage can be checked by examining immediate neighbors), 
then a major portion of the optimal solution would be preserved within the consecutive layers.  
By iterating over the initial shift value and removing one out of every $O(1/\e)$ layers in the BFS-tree, one may guarantee a $(1+\e)$-approximation to the optimal solution.
This is the main idea behind the polynomial-time approximation scheme (PTAS) of \textsc{Dominating Set} on unweighted planar graphs~\cite{bak-aanpp-1994}.

Not surprisingly, the same layering technique can be applied to the \EMPH{$r$-dominating set problem} ($r$-\textsc{Dominating Set} for short) in planar graphs for any absolute constant $r$, where the goal is similar to \textsc{Dominating Set} but every node in $S$ can ``cover'' a vertex subset of radius $r$~\cite{dfht-fakpg-2005}.
$r$-\textsc{Dominating Set} was introduced by Barilan, Kortsarz, and Peleg~\cite{bkp-hanc-1993} and later rediscovered in several other context~\cite{gprs-skspg-2001,dfht-fakpg-2005}.
%
However, one encounters some trouble when generalizing Baker's technique to minor-free graphs.  
This is because arbitrary minor-free graph no longer satisfies the diameter-treewidth property (imagine adding one extra vertex connecting to every node in a planar graph; the diameter becomes 2 but the graph is $K_6$-minor-free).
One possible way to overcome the issue is to apply the Robertson-Seymour structural theorem for minor-free graphs~\cite{gro-ltema-2003,rs-gmxen-2003}, which seems to be overkill.
Another, possibly more satisfying solution, is to understand what the $O(1)$-radius neighborhood balls can look like in a minor-free graph.
This is the route taken by the \emph{local search} approach.
On planar graphs, typical local search analysis~\cite{cg-spfge-2015} is usually done by applying an \emph{exchange argument} on the base graph $G$, utilizing the fact that planar graphs have small separators~\cite{lt-stpg-1979} (in the form of an $r$-division~\cite{fre-faspp-1987}).
Har-Peled and Quanrud~\cite{hq-aaplg-2017a} observed that, for problems like $r$-dominating set, the exchange argument in a typical local search analysis can be performed on the \emph{intersection graph} of $O(1)$-neighborhood balls~\cite{mr-irghs-2010,ch-aamis-2012} (more accurately, the subgraph $G_\cR$ induced by the union of the local search solution and the optimal solution).
They further observed that the intersection graph $G_\cR$ over $O(1)$-radius balls in minor-free graph $G$, while not necessarily itself minor-free, must have \emph{polynomial expansion}~\cite{no-gcbed-2008}.
Without going into definitions, the class of graphs with polynomial expansion coincide with those that have sublinear-size separators for all subgraphs~\cite{dn-ssspe-2015}.
Thus one can apply the same kind of exchange argument using separator decomposition on the ball intersection graph $G_\cR$.

\paragraph{Arbitrary radii and arbitrary weights.}
The $r$-dominating set problem becomes significantly harder when $r$ can depend on $n$~\cite{ekm-akpg-2014}.  
In the literature this is usually refer to as the the \EMPH{metric/unbounded $r$-dominating set problem}~\cite{fl-cetlt-2021,fl-ltepm-2022}, where we further allow arbitrary nonnegative \EMPH{edge-length $\ell$} on the input graph, in addition to the covering radius $r$ (now measuring with respect to $\ell$) becoming part of the input.
We can even consider a weighted variant of the problem, where every vertex is associated with a positive-valued \EMPH{weight $w$}, and the goal is to compute an $r$-dominating set with minimum total weight (instead of minimum cardinality).
Baker's layering technique no longer applies, nor does the local search strategy for the most part.
(There is an exception~\cite{le-sraam-2018,fl-cetlt-2021} which we will talk about in the technical section.)
This is somewhat expected because the unbounded $r$-dominating set problem is no longer ``local'', and the intersection graph of radius-$r$ balls may be dense and have complex interactions.
In fact, under Exponential-Time Hypothesis,
Fox-Epstein \etal~\cite{mp-opapf-2015,fks-epglg-2019} proved that no efficient PTAS can exist for the unbounded $r$-dominating set problem on planar graphs.
Thus, recent work has resorted to \emph{bicriteria} algorithms, allowing the returned solution to use a dominating set of radius $(1+\e)r$ instead, in addition to the total weight being a $1+\e$ approximation to the optimal value~\cite{ekm-akpg-2014,fks-epglg-2019}.
The idea is to embed the original planar/minor-free graph metric into some other \emph{tree-like metrics}, where the $r$-dominating set problem can easily be solved using dynamic programming.
Unfortunately one has to suffer some form of distortion (which translates to the slack in covering radius); the best result so far gives
$+\e\Delta$-additive embedding of diameter-$\Delta$ minor-free graphs into graphs of $O_{\e}(1)$ treewidth~\cite{ccl+-cpmot-2023,ccl+-spmgs-2024}.
This line of research has been fruitful and ultimately leads to bicriteria EPTAS for planar graphs~\cite{fks-epglg-2019} and bicriteria Q(Q)PTAS for minor-free graphs~\cite{fl-cetlt-2021,fl-ltepm-2022}.

But what if we really want to solve the unbounded $r$-dominating set problem without compromise on the radius $r$?
The lower bound result~\cite{mp-opapf-2015,fks-epglg-2019} leaves us no choice but to look at $O(1)$-approximation (or at least, $(1+\e)$-approximation with running time $n^{f(\e)}$). 
The existence of such algorithm was explicitly asked by Filtser and Le~\cite{fl-ltepm-2022}.

\paragraph{Our contribution.}
We establish the first polynomial-time $O(1)$-approximation algorithm for the (vertex)-weighted metric $r$-dominating set problem for unbounded $r$.
We emphasize that $r$ is part of the input, and can be arbitrarily large compared to $n$.

\begin{theorem}
\label{thm:weighted-dom}
Let $G$ be an arbitrary $n$-node planar graph, where each vertex $v$ is associated with a weight $w(v)$.
Let $r$ be an arbitrary radius parameter.  
There is a polynomial-time algorithm that computes an $r$-dominating set in $G$ whose weight is $O(1)$ times the~optimum.
\end{theorem}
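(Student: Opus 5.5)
We reduce to weighted set cover and apply the quasi-uniform sampling framework of Chan \etal. Take the ground set $V_G$ and the range family $\cR \coloneqq \set{B(v,r) : v \in V_G}$, with $B(v,r)$ weighted by $w(v)$. A minimum-weight $r$-dominating set is exactly a minimum-weight subfamily of $\cR$ covering $V_G$. Chan \etal\ show the following: if for every subfamily $\cR' \subseteq \cR$ of size $m$, the number of cells of depth at most $k$ in the arrangement of $\cR'$ is at most $m\,\varphi(m)\,k^{O(1)}$, then LP rounding via quasi-uniform sampling gives an $O(\log \varphi(n))$-approximation in polynomial time. Here a cell is a maximal set of ground elements contained in exactly the same ranges of $\cR'$. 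The plan is therefore to prove linear shallow cell complexity, $\varphi = O(1)$, for radius-$r$ ball systems in planar graphs with arbitrary edge lengths. We may restrict to ground points covered by at least one ball, since uncovered points make the instance infeasible.

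\textbf{Step 1: sparse support via Voronoi contraction.} Fix a subfamily of $m$ centers $C$. Break ties in shortest paths consistently, for example by perturbing edge lengths, so that shortest-path trees are unique and consistent. Partition $V_G$ into Voronoi cells $\mathrm{Vor}(c)$, $c \in C$; each cell is connected, since it is a subtree of the shortest-path forest. Contracting every Voronoi cell yields a planar graph $H$ on $C$, so $H$ has $O(m)$ edges and faces.

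The key property to verify is this: if a vertex $u$ lies in $B(c,r)$, then the shortest path from $c$ to $u$ passes only through Voronoi cells whose centers $c'$ also satisfy $u\in B(c',r)$. This holds because any vertex $x$ on that path lying in $\mathrm{Vor}(c')$ has $\dist(c',u) \le \dist(c',x)+\dist(x,u) \le \dist(c,x)+\dist(x,u) = \dist(c,u)$. Consequently, for each $u$ the set of centers whose balls contain $u$ induces a connected subgraph of $H$ containing the center of $u$'s own Voronoi cell. So $H$ is a planar support, and it is preserved under taking minors.

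\textbf{Step 2: counting cells.} Assign each cell of depth at least $3$ to a canonical triple of centers. For a cell with point $u$, take $u$'s Voronoi owner $c_0$; this is the nearest center in $C$, hence it covers $u$. Then pick two further covering centers determined canonically, for example the second- and third-nearest centers to $u$ under the consistent tie-breaking. The cell must then be determined by its triple together with bounded extra information. Using Clarkson--Shor sampling, the number of cells of depth at most $k$ is bounded by $O(k^{3})$ times the expected number of depth-exactly-$3$ cells in a random sample of size roughly $m/k$. This reduces the problem to showing that the arrangement of any $m$ balls has only $O(m)$ cells of depth exactly $3$; depth-$1$ and depth-$2$ cells are handled similarly via vertices and edges of $H$.

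\textbf{Step 3: the main obstacle, depth-exactly-$3$ cells.} A depth-$3$ cell with centers $\set{a,b,c}$ induces a connected subgraph on these three centers in $H$: either a triangle, or a path. Triangles, and more generally faces, are $O(m)$ in number by planarity. Paths $a$--$b$--$c$ are the difficult case, because a vertex of high degree in $H$ has quadratically many such paths. I would try to show the following: if $u$ lies in the Voronoi cell of $b$ and in the balls of $a$ and $c$ but no others, then the shortest paths from $a$ and from $c$ to $u$ enter $\mathrm{Vor}(b)$ through boundary edges. Together with the shortest-path tree of $b$, these paths cut the planar region around $b$, which forces the neighbors $a$ and $c$ to be consecutive in the rotation at $b$. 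Alternatively, any intermediate neighbor $c'$ would be separated and its ball would have to contain $u$ as well, contradicting depth $3$; this is a Jordan-curve argument on the contracted embedding. Then each vertex $b$ contributes $O(\deg_H b)$ triples, for a total of $O(m)$. With linear shallow cell complexity in hand, quasi-uniform sampling yields the $O(1)$-approximation claimed in \cref{thm:weighted-dom}.
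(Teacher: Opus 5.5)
Your overall architecture (weighted set cover, Voronoi-contraction support, Clarkson--Shor down to depth~$3$, then a linear bound on depth-$3$ cells) matches the paper, and your Step~1 is the paper's support lemma. However, the two steps that carry the actual difficulty do not work as proposed.

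\textbf{Step 2.} The Clarkson--Shor reduction needs the map from depth-$k$ cells to triples to be injective, or at least $O(1)$-to-one. If several depth-$k$ cells share a triple $T$, then in any sample containing $T$ and none of their other balls, they all collapse into the single depth-$3$ cell with hit set $T$. The nearest-three triple is far from injective, and no bounded extra information repairs it. For example, work in a fine grid approximating the plane. Put $a,b,c$ within a tiny distance of a point $o$, and put $m$ further centers evenly spaced at distance $1.5r$ from $o$. Every point at distance $\rho\in(r/2,3r/4)$ from $o$ has $a,b,c$ as its three nearest centers. The outer balls covering such a point form an angular window whose width is controlled by $\rho$. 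Going once around that circle therefore passes through $\Theta(m)$ distinct cells of one fixed depth $k$, all with the same nearest triple.

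The paper's fix (Lemma~\ref{lma:encoding}) instead uses the two \emph{farthest} covering balls $\alpha(v),\beta(v)$. These have a monotonicity that nearest balls lack: by Lemma~\ref{lma:ancestor_subsets}, a vertex $v'$ on $\SP(\alpha(v),v)\cup\SP(\beta(v),v)$ with the same $\langle\alpha,\beta\rangle$-type satisfies $\hit(v')\subseteq\hit(v)$. Hence same-type representatives form an antichain in the shortest-path trees. They are ordered by a planar DFS, and $\gamma(c_i)$ is chosen from $\hit(v_{c_i})\setminus\hit(v_{c_{i-1}})$. A Jordan-curve argument on the cycle formed by the $\alpha$- and $\beta$-paths to $v_{c_{i-1}}$ and $v_{c_{j-1}}$ then shows that $\gamma$ cannot repeat.

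\textbf{Step 3.} Your claim that $a$ and $c$ must be consecutive in the rotation at $b$ is false, and so is the alternative that any intermediate neighbor's ball must contain $u$. Again in a grid approximating the plane, take, in units of $r$, $b=(0,0.05)$, $u=(0,0)$, $a=(-0.9,0)$, $c=(0.9,0)$, and two more centers $(0,\pm1.5)$. Then:
\begin{itemize}
\item $\hit(u)=\set{a,b,c}$ and $u$ lies in the Voronoi cell of $b$;
\item $a$ and $c$ are not Voronoi neighbors, so $\set{a,b,c}$ induces the path $a$--$b$--$c$;
\item $(0,1.5)$ and $(0,-1.5)$ are Voronoi neighbors of $b$ lying in the two different rotation arcs between $a$ and $c$, yet their balls miss $u$.
\end{itemize}
Your case analysis is also incomplete: the Voronoi owner of $u$ need not be the middle vertex of the path.

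What is true, and what the paper proves, is weaker. Charge each depth-$3$ cell to a base ball: $\beta$ if $\SP(\alpha,v_c)$ visits the Voronoi cell of $\beta$, and $\gamma$ otherwise. For each ball $D$, form the auxiliary graph $G_D$ on $N_H(D)$, with one edge per charged cell joining its other two balls. Draw each edge along shortest paths from those two centers. The overtaking Lemma~\ref{lma:overtaking} shows that the drawings of two charged cells can meet only if their edges share an endpoint (Lemma~\ref{lma:no_crossing}), so $G_D$ is planar. This gives at most $3\deg_H(D)$ cells per ball rather than $\deg_H(D)$, which is still $O(|\cR|)$ in total.
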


\subsection{Techniques}

Our new (single-criteria) $O(1)$-approximation algorithm for weighted $r$-dominating set on planar graphs uses the quasi-uniformity sampling technique, originally designed for weighted set cover problem on geometric shapes~\cite{var-wgscq-2010,cgks-wcpgs-2012}.
But first, we have to describe a bit of (technical) history and two immediate challenges when adapting to our setting.

\paragraph{Challenge 1: Planar support.}
Geometric set cover problems have their own history and success~\cite{bg-aoscf-1995,ClarksonShor1989,ClarksonVaradarajan2007,var-enuc-2009}.
Notably, for an unweighted set cover instance $(X,\cR)$, 
the result of Br\"onnimann and Goodrich~\cite{bg-aoscf-1995} provided a recipe to 
achieve $O(1)$-approximation.
Their pioneering use of multiplicative weight update algorithm requires the existence of small \emph{$\e$-net}:
First, find an $\e$-net $N$ for the dual set system $(\cR,X)$ (notice that here the $\e$-net is in the dual and thus a subset of ranges in $\cR$). 
If there are element $x$ in $X$ not covered by $N$,
\emph{double} the weight of every range in $\cR$ that covers $x$, to increase the chance of choosing one such range in the next round.
As a result, by setting $\e = \Theta(1/\opt)$, this implies an $O(g(\opt))$-approximation for any set cover instance, given the dual system has an $\e$-net of size $O(\frac{1}{\e} \cdot g(\frac{1}{\e}))$.  
(See \cite{ers-hswvs-2005} for an LP interpretation and a simpler proof.)

For set system of bounded VC-dimension~\cite{vc-ucrfe-1971}, there is always an $\e$-net of size $O(\frac{1}{\e} \log \frac{1}{\e})$~\cite{hw-esrq-1987}.  Also, if the VC-dimension of a primal system is $d$, then the dual system has VC-dimension at most $2^d$~\cite{har-gaa-2011}.
%
%
%
To get $O(1)$-approximation, however, we need a much stronger assumption of $O(\frac{1}{\e})$-size $\e$-net \emph{in the dual system}.
This is not always achievable even for simple geometric shapes like lines and rectangles~\cite{mv-ee-2017}; in fact, among the basic geometric shapes, only a handful are known to have linear-size $\e$-net, including halfspaces in 2D and 3D~\cite{msw-hnlls-1990,pw-nbe-1990,mat-rph-1992}, unit-cubes in 3D~\cite{ClarksonVaradarajan2007}, and family of pseudodisks in the plane (both primal and dual)~\cite{PR08}.

For our application of unbounded $r$-dominating set problem, the ``geometric shapes'' are the radius-$r$ balls in planar graphs, which is discrete in nature, lacking well-defined boundary curves to even make the definition of pseudodisks meaningful. 
One might sneer at the issue and say, ``there must be a way; how can distance balls in the plane be much different from pseudodisks?''
But even if we redefine the notion of pseudodisks combinatorially (for every pair of regions $A$ and $B$, all three sets $A \cap B$, $A \setminus B$, and $B \setminus A$ are connected), the distance balls are still not pseudodisk. In fact, they can be \emph{piercing} in the sense of Raman-Ray~\cite{RamanRay2020}, as illustrated by Figure~\ref{fig:piercing}.

\begin{figure}[h!]
    \centering
    \includegraphics[width=0.5\linewidth,page=8]{voronoi.pdf}
    \caption{\internallinenumbers
    Two distance balls $R$ and $R'$ that are piercing. Here both $R$ and $R'$ have radius 6.}
    \label{fig:piercing}
\end{figure}

Fortunately, Pyrga and Ray~\cite{PR08} provided an alternative method to obtain linear-size $\e$-net (in the dual) --- to construct a \emph{sparse support} for the primal system $(X,\cR)$.
We say a set system $(X,\cR)$ has a \EMPH{support}, if for any subsystem $(X',\cR_{|X'})$ of $(X,\cR)$, there is an undirected graph $\HH$ on the elements of $X'$ such that for every set $R$ in $\cR_{|X'}$, the subgraph $\HH_R$ of $\HH$ induced by the elements of $R$ is connected.
The support is \EMPH{sparse} if every such graph $\HH$ is sparse.
The work of Pyrga and Ray~\cite{PR08} implies that if $(X,\cR)$ has bounded VC-dimension and a sparse support, then the dual system $(X,\cR)$ has $\e$-net of size $O(\frac{1}{\e})$.

Our first technical contribution is to compute a support for any collection of distance balls in any \emph{arbitrary} graph (not just planar graphs). Crucially, if the given graph is planar, then so is the support.
We remind the reader that our original goal is to solve the $r$-dominating set problem with unbounded $r$; thus the construction of the support has to hold for balls of unbounded radii as well.
This is the first construction of its kind; most previous work on sparse support focus on geometric shapes, and almost entirely restricted to \emph{planar support} for set systems that are pseudodisks~\cite{PR08} and their generalizations~\cite{RamanRay2020}.
%
For piercing systems like axis-parallel rectangles, there are configurations
where no planar support can exist~\cite{prrs-facps-2024}. 
%

Instead of focusing on planarity, 
we shift the narrative by arguing that a support graph for $(X,\cR)$ is really a \emph{connectivity sketch} that exists for every graph $G$ on $X$, as long as the sets in $\cR$ are \emph{distance balls} on $G$.
%
Our construction for support on arbitrary balls in $G$ can be described simply as follows:
Build a \emph{Voronoi diagram} of $V(G)$ with respect to balls centers, then contract every Voronoi cell into a single node.  We argue that the contracted graph is a support of $G$.
(The benefit of this additional property becomes evident when we tackle the next challenge.)
Both planarity and sparseness come as a byproduct of the fact that 
contraction of the Voronoi diagram is a \emph{minor} of $G$.  (See Theorem~\ref{thm:dual_support} for a precise statement.)  
(A similar Voronoi diagram idea can be found in the work of Le~\cite{le-sraam-2018,fl-cetlt-2021}; but their version only enables single-criteria algorithm for \emph{unweighted} metric $r$-dominating set problem.)

\paragraph{Challenge 2: Linear shallow cell complexity.}
There is another, perhaps more severe, obstacle to overcome.
We want to solve the \emph{weighted} version of the $r$-dominating set problem.
The Br\"onnimann-Goodrich MWU approach relies on computing regular $\e$-net (as subset of balls covering the vertices); the \emph{size} bound on $\e$-net translates to approximation guarantee on \emph{size} of the cover, and thus can only work in the unweighted setting.
Varadarajan~\cite{var-wgscq-2010} extended the $\e$-net idea to the weighted setting.
This \emph{quasi-uniformity sampling technique} computes $\e$-nets that samples each set in $\cR$ with (roughly) equal probability, which then can be made aware of the weights on $\cR$.
Chan, Grant, K\"onemann, and Sharpe~\cite{cgks-wcpgs-2012} later polished this technique and made connection to a more refined measure on set systems called \EMPH{shallow cell complexity}. 
In our context, cell complexity refers to the number of different equivalent classes (called \emph{cells}) of points that are covered by the same subset of $r$-balls.  We say a cell has \emph{depth $k$} if the points within are covered by up to $k$ balls.  
A set system has shallow cell complexity $f(|\cR|,k)$ if the number of cells of depth at most $k$ is bounded by $f(|\cR|,k)$;
for the purpose of obtaining $O(1)$-approximation for set cover, one requires the shallow cell complexity to be $O(|\cR| \cdot \poly k)$.

The quasi-uniformity sampling technique can be used to solve the \emph{constant $r$} case for weighted $r$-dominating set problem on minor-free graphs (and graphs with bounded expansion as well~\cite{dvo-amcbe-2022}).
This is because in the case of constant radius $r$,
the \emph{whole} cell complexity is linearly bounded~\cite{rvs-cbenc-2019,jr-ncpg-2023}, which directly bounds the shallow counterpart.
Unfortunately, it is known that the cell complexity of $r$-balls on a planar graph can be cubic in $r$~\cite{jr-ncpg-2023}, and thus cannot help in the unbounded $r$ setting.

One way to see why halfspaces and pseudodisks are special is that they both have linear \emph{union complexity}~\cite{klps-ujrct-1986,ClarksonVaradarajan2007}.
Application of the classical Clarkson-Shor technique then implies that the shallow cell complexity is bounded by $O(|\cR| \cdot k)$.
%
When working with arbitrary-radius balls on graphs however, we run into trouble applying Clarkson-Shor.
First, there is no well-defined notion of union complexity for a collection of subgraphs.
One would imagine the support can take the role as the linear-size object at the base case for Clarkson-Shor; but this intuition is insufficient.
%
The main issue is that for pseudodisks, the number of cells can be charged to the complexity of the whole pseudodisk arrangement by Euler's formula, and every crossing in the arrangement is uniquely defined by two pseudodisks.
There is no immediate equivalent object for distance balls that allows for such counting.


Our solution is to reimagine the unique representative argument by Raman and Ray for the non-piercing family~\cite{rr-gsmp-2022}; the new proof is substantially different and relies on geometric properties of distance balls, instead of topological properties of curve sweeping~\cite[\S6]{rr-gsmp-2022}.
First we argue that every cell can be charged to a unique 3-tuple of ball centers (\Cref{lma:encoding}).  
This allows us to apply Clarkson-Shor and reduce the problem to counting cells of depth 1, 2, and 3.
Depth 1 and 2 are straightforward; for depth-3 cells, we exploit the additional property that our support for distance balls came from contracting \emph{Voronoi cells} with respect to the ball centers, and thus the number of depth-3 cells can be counted towards the complexity of the Voronoi diagram, which is asymptotically the same as the planar support (\Cref{lma:depth_3_linear}).

\section{Terminology and Support Graphs}

Let's first introduce the terminology needed for concrete discussion.
Our support construction actually works for balls in an arbitrary \emph{directed} graph. 
We refer to the digraph that our ball system lives on as the \EMPH{underlying graph} $G$. 
Note that all results apply to balls in undirected graphs.
%
A \EMPH{ball} $D$ can be defined using a pair $(c_D,r_D)$ where $c_D$ is a vertex of $G$ called the \EMPH{center}, and a \EMPH{radius} $r_D$ which is a real number. 
Ball $D$ is then defined to be the set of all vertices $v$ whose shortest path from $c_D$ has length at most $r_D$. 

Given a system $(U, \cR)$, the \EMPH{dual system} $(\cR,U)$ has the elements of $U$ treated as subsets of $\cR$: if element $e$ in $U$ is a member of set $R$ in $\cR$ in the primal system, then $R$ is a member of \emph{set} $e$ in the dual system.  
We use the term \EMPH{dual support} of $(U,\cR)$ to refer to a support graph of the dual system $(\cR,U)$. 
(We give an example of a dual support in Figure~\ref{fig:support_def}.)
Although our $r$-dominating set application demands primal support,
the set system of balls \emph{of equal radius} is self-dual,
Thus, we do not differentiate the support for the primal from the dual, but instead describe the support for the dual system since it is conceptually easier. 
%

\begin{figure}[t!]
    \centering
    \includegraphics[page=1, width=0.6\linewidth]{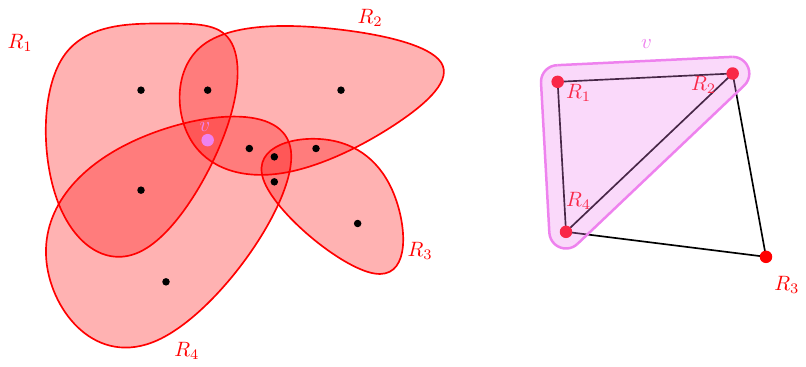}
    \caption{\internallinenumbers
    A set system with a dual support. The red regions containing the violet vertex $v$ form a connected subgraph in the support. 
    }
    \label{fig:support_def}
\end{figure}


In \Cref{sec:dual} we construct a dual support for any system of balls, by contracting cells of the Voronoi diagram of the underlying graph (with respect to ball centers) into a~minor.

\begin{restatable}[Support]{theorem}{dualsupport}
\label{thm:dual_support}
    Any system of balls $\cR$ on an underlying digraph $G$ has a dual support $\HH(\cR)$ such that if $G$ does not contain a minor $K$ with minimum vertex degree two, then neither does $\HH(\cR)$. 
\end{restatable}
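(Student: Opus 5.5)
We use the \emph{additively weighted} Voronoi diagram with respect to the ball centers. Let $\delta_D(v) \coloneqq r_D - \dist(c_D,v)$, so that $v\in D$ exactly when $\delta_D(v)\ge 0$. Each vertex $v$ covered by at least one ball is assigned to an \emph{owner} ball $D$ maximizing $\delta_D(v)$. Uncovered vertices play no role and are deleted. Ties are broken consistently, say by symbolic perturbation of the radii with distinct infinitesimals, so that the argmax is unique. Let $\cell(D)$ be the set of vertices owned by $D$. We then build $\HH(\cR)$ in three steps:
\begin{itemize}
\item contract every nonempty cell to a single node;
\item join two nodes whenever some edge of $G$ (directions ignored) runs between their cells;
\item attach every ball with an empty cell as a pendant leaf, as described below.
\end{itemize}

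\textbf{Step 1: cells are connected.} Suppose $v\in\cell(D)$ and let $u$ be the predecessor of $v$ on a shortest $c_D$-to-$v$ path. For any other ball $E$, the directed triangle inequality gives
\[
\delta_E(u) \le \delta_E(v) + \dist(u,v) \le \delta_D(v)+\dist(u,v) = \delta_D(u).
\]
Hence $D$ still owns $u$. With perturbed radii the inequality is strict, and it is compatible with choosing a fixed shortest-path tree from each center. So $\cell(D)$ contains the whole tree path back to $c_D$ and is connected whenever it is nonempty. Consequently the contracted graph on the nonempty cells is a minor of the underlying undirected graph of $G$ (delete the uncovered vertices, then contract connected cells).

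\textbf{Step 2: empty cells.} If $\cell(D)=\emptyset$, then $c_D$ is owned by some $D'\ne D$ with $\delta_{D'}(c_D)>\delta_D(c_D)=r_D$. The same triangle inequality computation shows $\delta_{D'}(v)>\delta_D(v)\ge 0$ for every $v\in D$, so $D\subseteq D'$. We attach $D$ as a degree-one leaf to the node of $D'$. Adding pendant vertices cannot create a minor $K$ of minimum degree two. Indeed, in any minor model of such a $K$, a leaf either is deleted or contributes nothing essential: a branch set consisting only of the leaf would have degree at most one, and otherwise the leaf can be removed from its branch set. This is exactly why the theorem is stated for minors of minimum degree two.

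\textbf{Step 3: support property.} Fix a vertex $v$ and let $D^*$ be its owner. We show that every ball $D\ni v$ is joined to $D^*$ inside the subgraph of $\HH$ induced by the balls containing $v$.

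First suppose $\cell(D)\neq\emptyset$. Walk along the shortest path $P$ from $c_D$ to $v$. For each $u\in P$ with owner $E$ we have
\[
\delta_E(v)\ge \delta_E(u)-\dist(u,v)\ge \delta_D(u)-\dist(u,v)=\delta_D(v)\ge 0,
\]
so every owner met along $P$ contains $v$. Consecutive vertices of $P$ have equal or adjacent owners in $\HH$. The walk starts at $D$, the owner of $c_D$, and ends at $D^*$.

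Now suppose $\cell(D)=\emptyset$. Then $D$ is a leaf attached to $D'\supseteq D$, so $D'$ also contains $v$, and we repeat the argument from $D'$.

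\textbf{Main obstacle.} The delicate points are the tie-breaking, which must simultaneously make cells connected and keep the owners along $P$ consistent, and the handling of balls whose cells are empty. The perturbation argument resolves the first, and the pendant-leaf trick combined with the minimum-degree hypothesis resolves the second.
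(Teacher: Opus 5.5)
Your proof is correct and is essentially the paper's argument. Your additively weighted Voronoi partition is the paper's ordinary Voronoi partition of the augmented graph $G'$, restricted to the original vertices, since $d_{G'}(x_R,v)=r_{\max}-r_R+d(c_R,v)$; your pendant leaves for empty cells are exactly the paper's cells $F_R=\{x_R\}$, whose degree-one node hangs off the owner of $c_R$, and your shortest-path support argument and removal of degree-one vertices from a minor model of $K$ match the paper's.
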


\noindent \textit{Remark.}  
Unfortunately the min-degree requirement cannot be dropped in general.  For example, if $G$ is $P_2$-minor-free, we have a set of disjoint vertices.  But two balls can centered on the same vertex, and thus connected in the dual support, which implies the existence of an edge and so the support is not $P_2$-minor-free.
Still the restriction is not severe as most of the interesting minor-free graph classes satisfy the min-degree condition, including $K_h$-minors for $h\ge 3$, grid minors, and all forbidden minors for bounded-genus graphs.

\medskip
Since planar graphs can be classified by their excluded minors, these results imply a planar dual support for the ball system $\cR$ on $G$. 
We then proceed to show that $\cR$ has linear shallow cell complexity in \Cref{S:shallow-cell}. 

\begin{restatable}[Shallow Cell Complexity]{theorem}{shallowcell}\label{thm:shallow_cell_planar}
    Let $G$ be a planar graph.
    Any system of balls $(V_G,\cR)$ has shallow cell complexity $O(|\cR| \cdot k^2)$.
\end{restatable}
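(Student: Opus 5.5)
The plan follows the outline in the introduction: encode each cell by a unique triple of ball centers, reduce via Clarkson--Shor to cells of depth at most $3$, and bound the depth-$3$ cells with the Voronoi-contraction support of \Cref{thm:dual_support}.

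\textbf{Step 1: a unique encoding of cells.} Given a cell $C$ (a maximal set of vertices covered by exactly the same subfamily $\cR_C\subseteq\cR$, $|\cR_C|=j\le k$), I will assign to $C$ a sub-tuple of at most three balls $T(C)\subseteq\cR_C$. The assignment must satisfy two properties. First, in the subsystem $\cR'=\cR\setminus(\cR_C\setminus T(C))$, the vertices of $C$ still form a cell whose covering set is exactly $T(C)$, so $C$ becomes a cell of depth $|T(C)|\le 3$. Second, $C$ is recovered from $T(C)$ together with the removed set, so that distinct cells of $\cR$ map to distinct pairs (retained triple, cell of depth $\le 3$ in the subsample).

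A natural way to choose $T(C)$ is to fix a vertex $v\in C$ and rank the balls in $\cR_C$ by the slack $r_D-\dist(c_D,v)$, using a consistent tie-breaking order. The triple should capture which balls "define" the boundary of $C$. Using shortest-path structure, I will argue that any vertex covered by the balls of $T(C)$ but outside $C$ is separated from $C$ by a ball outside $\cR_C$, so the identification is unique. This is the analogue of \Cref{lma:encoding}.

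\textbf{Step 2: Clarkson--Shor.} Sample each ball independently with probability $p=1/k$. A cell $C$ of depth $j\le k$ survives as a depth-$\le 3$ cell charged to $T(C)$ if the balls of $T(C)$ are kept and the other $j-|T(C)|$ balls of $\cR_C$ are dropped. This happens with probability at least $p^3(1-p)^{k}=\Omega(k^{-3})$. Writing $f_{\le 3}(m)$ for the maximum number of cells of depth at most $3$ in a system of $m$ balls, we get
\[
\#\{\text{cells of depth}\le k\}\le O(k^3)\,\mathbb E\bigl[f_{\le 3}(|\cR_p|)\bigr]=O(k^3)\cdot O(|\cR|/k)=O(|\cR|k^2),
\]
provided $f_{\le3}(m)=O(m)$. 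This is exactly the target exponent $k^2$.

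\textbf{Step 3: linearly many cells of depth at most $3$.} Depth-$1$ cells are at most $|\cR|$ by the encoding. Depth-$2$ cells correspond to pairs of balls that intersect. Intersecting pairs of balls are adjacent or identical in the dual support $\HH(\cR)$: the contracted Voronoi diagram is planar, because planarity is characterized by excluding $K_5$ and $K_{3,3}$, both of min-degree at least $2$, so \Cref{thm:dual_support} applies. Hence there are $O(|\cR|)$ such pairs, after handling coincident centers by a tie-breaking perturbation.

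For depth $3$, I will show each such cell is witnessed by a vertex whose Voronoi cell, or a short shortest-path walk through neighbouring Voronoi cells, certifies that the three centers form a face or a bounded local configuration of the planar contracted Voronoi graph. A vertex $v$ in the triple intersection has shortest paths to the three centers, and these paths traverse Voronoi regions. Contracting them yields a "tripod" in $\HH$ whose three arms land in the three cells. Planarity then bounds the number of distinct triples obtained this way by the number of faces of $\HH$, which is $O(|\cR|)$ by Euler's formula.

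\textbf{Main obstacle.} I expect the main obstacle to be this depth-$3$ count: showing that distinct depth-$3$ cells yield distinct faces or triangles, rather than just triples of mutually adjacent nodes. Many depth-$3$ triples could otherwise exist, since a planar graph can contain many triangles; Euler's formula still bounds these linearly, which I would fall back on. A secondary difficulty is making the Step 1 encoding injective without pseudodisk topology, for which I would try the slack-ordering argument first.
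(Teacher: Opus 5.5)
\textbf{Overall.} Your outline is the paper's: a unique three-ball encoding, Clarkson--Shor with $p=1/k$, and a linear bound on cells of depth at most $3$ via the Voronoi-contraction support. Your Step~2 and your depth-$1$/depth-$2$ counts also agree with the paper. The gap is that the two steps carrying all the weight, the encoding and the depth-$3$ count, are not supplied, and the concrete choices you sketch for them fail.

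\textbf{The encoding.} As stated, your second property is automatic, since $T(C)$ together with $\cR_C\setminus T(C)$ determines $\cR_C$. It is also not what Clarkson--Shor uses. For a \emph{common} random sample $S$, each depth-$\le 3$ cell of $S$ must be charged by $O(1)$ cells $C$. This amounts to $C\mapsto T(C)$ being injective among cells of a fixed depth.

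No rule that picks three balls by slack rank achieves this. Take a tree with hub $o$, with radius-$1$ balls $D_1,D_2,D_3$ centered at $o$ and at two pendant vertices at distance $\varepsilon$ and $2\varepsilon$ from $o$. Add spokes $o - v_i - e_i$ with $d(o,v_i)=0.9$ and $d(v_i,e_i)=0.5$, and a radius-$1$ ball $E_i$ at each $e_i$. Each $v_i$ is its own depth-$4$ cell $\{D_1,D_2,D_3,E_i\}$, yet the three tightest balls are $D_1,D_2,D_3$ for every $i$. Swapping the two spoke lengths makes them the three loosest instead, and similar spokes defeat any fixed choice of ranks.

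Your ``separated by a ball outside $\cR_C$'' argument cannot help: the separating ball $E_i$ is exactly what the sample discards. The missing idea is the paper's construction:
\begin{itemize}
\item Fix only $\alpha,\beta$ (the two furthest balls) by distance.
\item Order the depth-$k$ cells of a given $\langle\alpha,\beta\rangle$-type cyclically by a DFS of the shortest-path tree from $x_\alpha$.
\item Let $\gamma(c_i)$ be a ball in $\hit(v_{c_i})\setminus\hit(v_{c_{i-1}})$.
\end{itemize}
Injectivity then follows from Lemma~\ref{lma:ancestor_subsets} (no ancestor relations among representatives) and a planarity argument. The cycle formed by the shortest paths from $\alpha$ and $\beta$ to $v_{c_{i-1}}$ and $v_{c_{j-1}}$ separates $v_{c_i}$ from $v_{c_j}$. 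This encoding is unique only within a fixed depth, which suffices for the exact-depth-$k$ statement. Your Step~2 tacitly asks for injectivity across all depths $\le k$.

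\textbf{Depth-$3$ cells.} A depth-$3$ hit set is only guaranteed to induce a \emph{connected} subgraph of $\HH$, typically a path $D'-D-D''$, not a triangle or a face. Take $G$ to be an $m$-cycle with radius-$1$ balls at all vertices. Then $\HH$ is the $m$-cycle itself, with two faces and no triangles for $m\ge 4$, yet there are $m$ depth-$3$ cells $\{D_{i-1},D_i,D_{i+1}\}$. Connected triples in a planar graph can number $\Theta(|\cR|^2)$ (think of a star). So Euler's formula on faces or triangles bounds nothing here, and your fallback does not apply.

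What is missing is a way to charge each depth-$3$ cell to a ``middle'' ball $D$, and then to show that for fixed $D$ only $O(\deg_\HH D)$ pairs $\{D',D''\}$ arise. The paper does this in four steps:
\begin{itemize}
\item It sets $\base(v_c)$ to $\beta$ if $\SP(x_\alpha,v_c)$ enters $F_\beta$, and to $\gamma$ otherwise.
\item It extracts a subwalk of $\pi_{\alpha,\beta}(v_c)$ through $F_D$ joining the Voronoi cells of the other two balls (Lemma~\ref{lma:subwalk_new}).
\item Using Lemma~\ref{lma:overtaking}, it shows that the walks of two residents of $D$ can meet only if their pairs share a ball (Lemma~\ref{lma:no_crossing}).
\item Hence the pairs form a simple planar graph $G_D$ on $N_\HH(D)$, giving $|\resident(D)|\le 3\deg_\HH(D)$; summing over $D$ gives $O(|\cR|)$.
\end{itemize}
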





\subsection{Applications}
In addition to the result on computing unbounded $r$-dominating set in planar graphs, the supports also have their own implications. 
An important property of our support construction is that we only ever contract edges that are within radius $r$ of a ball center. 
This means that if $G$ is an unweighted graph with \emph{polynomial expansion} and we are considering constant-radius balls, the support graph will also have polynomial expansion, allowing us to use the sublinear separators from Har-Peled and Quanrud~\cite{hq-aaplg-2017a} to get the same PTAS results for unweighted problems as Raman and Ray~\cite{RamanRay2020} achieved for non-piercing regions. 
If $r$ is not bounded by a constant, we can achieve the same when $G$ is minor-free~\cite{minorsep}.

Another natural use of supports is for problems relating to geodesic disks in a polygonal domain (possibly with holes). 
It is known that planar metrics and polygonal domains 
are \emph{equivalent}: any system of geodesic disks in a polygonal domain in $\mathbb{R}^2$ can be represented as a system of balls in a planar graph, and vice versa~\cite{SujoyMetric}. 
By the equivalence,
our techniques naturally give planar supports for geodesic disks in polygonal domains.






\subsection{Related Work on Support Graphs}


%
Van Cleemput~\cite{van1976hypergraph} and Voloshina and Feinberg~\cite{VoloshinaFeinberg1984} used the existence of a {planar} support graph to give meaningful notion of planarity in the hypergraph setting. 
As such, support graphs have seen interest in (hyper)graph drawing community~\cite{JohnsonPollak,KaufmanSubdivision,buchin2009planar}.
In the realm of optimization, Pyrga and Ray~\cite{PR08}
%
%
used support graphs to construct linear-sized $\e$-nets for several set systems with bounded VC dimension, including halfspaces in $\mathbb{R}^3$ and pseudodisks (more generally, $r$-admissable regions) in $\mathbb{R}^2$.  Here the support graphs were required to be \emph{sparse} (i.e.\ with linearly many edges) in order to bootstrap existing $\e$-net sizes down to $O(\frac{1}{\e})$. 
%
Later on support graphs that are \emph{planar} were used by Mustafa and Ray~\cite{mr-irghs-2010} to show a PTAS for various optimization problem in the geometric setting.
Instead of $\e$-net, they exploited the existence of \emph{sublinear-size} separators in planar graphs to show that an $O_\e(1)$-swap local search algorithm gives a $(1+\e)$-approximation for minimum geometric hitting set. 
The same paradigm of analyzing local search with separators on planar support was extended to other problems, such as maximum independent set and generalized set cover (including dominating set and hitting set) in pseudodisks and non-piercing regions~\cite{ch-aamis-2012,durocher2015duality,govindarajan_packing_nonpiercing,RamanRay2020}, terrain guarding~\cite{gkkv-gtls-2014,bb-elsag-2017}, and covering the boundary of a simple orthogonal polygon with rectangles~\cite{basuroy}. 
(A survey on the many use of support can be found in Raman and Singh~\cite{raman2023hypergraph}.)

\section{Existence of Support Graph}
\label{sec:dual}

To begin constructing the dual support for a system of balls $\cR$ in a graph $G$, we will first modify $G$ slightly and reduce to the case where \emph{every ball in $\cR$ has the same radius}.
We will refer to the new graph as \EMPH{$G'$}.
(Recall we work with the general case of digraphs.)
We augment digraph $G$ by adding a new node corresponding to each ball in $\cR$. 
(For clarity, we refer to original vertices of $G$ as \emph{vertices} and new vertices corresponding to $\cR$ as \emph{nodes}.)
For each ball $R$ in $\cR$, we create a single node, \EMPH{$x_R$}, which has only one edge: an outgoing edge into $c_R$, the center of the ball $R$. 
We give the edge a weight of $r_{\max} - r_R$, where \EMPH{$r_{\max}$} is defined to be $\max_{R'\in \cR} r_{R'}$. 
This allows us to instead consider each $R$ in $\cR$ to be centered at $x_R$, all with the same radius $r_{\max}$, and the balls will still define the same set system $\cR$ over the vertices in $G$. 
Finally, remove all vertices which are not contained in any ball of $\cR$ and slightly perturb the edge weights (without changing any ball containment relations) so that 
every vertex $v$ has a unique closest ball center in $\set{x_R : R \in \cR}$.

To build the dual support, we construct a \emph{Voronoi partition} of $G'$ with respect to the set of nodes. 
A \EMPH{Voronoi partition $\cF$} of a graph $G'$ with respect to a set of nodes $S$ is a partition of $V_{G'}$ into subsets such that for every node $R \in S$, there is a corresponding subset $\text{\EMPH{$F_R$}} \in \cF$ which contains all vertices $u$ in $G'$ for which $d(x_R,u) \leq \min_{x_{R
}\in S \setminus \set{x_R} } d(x_{R'},u)$. 
For clarity, we will refer to each set $F_R$ in $\cF$ as the \EMPH{Voronoi cell} of $R$. 
In our case, we construct the Voronoi partition $\cF_\cR$ of $G'$ with respect to $\set{x_R : R \in \cR}$, which we identify as $\cR$. 
Each cell~$F_R$ in the Voronoi partition is a connected subset of $G'$ because for every vertex $v$ in $F_R$ reachable from $x_R$, the vertices on the shortest path between $x_R$ and $v$ must belong to $F_R$ as well.
%
%
%
Thus we can simply contract every $F$ in $\cF_\cR$ to a (contraction)-minor on only the nodes in $\cR$. This final graph \EMPH{$\HH$}, we claim, is a dual support for $(G,\cR)$.


\begin{figure}[ht]
\centering\includegraphics[width=0.7\linewidth,page=3]{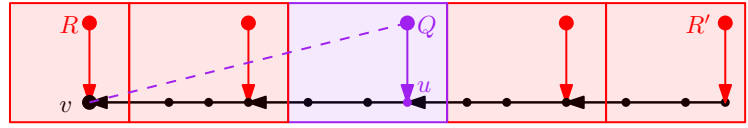}
    \caption{\internallinenumbers
    An illustration of the proof to Lemma~\ref{lma:dual_is_a_support} that $\pi$ can only pass through Voronoi cells of $\cR_v$.  The purple region is the Voronoi cell $F_Q$}
    \label{fig:dual_lemma}
\end{figure}

\begin{lemma}\label{lma:dual_is_a_support}
    For any vertex $v$ in $G'$, the set of balls $\EMPH{$\cR_v$} \subseteq \cR$ that contain $v$ must induce a connected subgraph in~$\HH$.
\end{lemma}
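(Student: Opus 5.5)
Fix a vertex $v$ of $G'$ and let $Q \in \cR_v$ be the ball whose center $x_Q$ is closest to $v$, so $v \in F_Q$. The plan is to show that every $R \in \cR_v$ is joined to $Q$ by a walk in $\HH$ that only visits nodes of $\cR_v$. Then all of $\cR_v$ lies in the connected component of $Q$ within the subgraph induced by $\cR_v$, and that subgraph is connected.

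Fix $R \in \cR_v$ and a shortest path $\pi$ from $x_R$ to $v$ in $G'$. Since $v \in R$, its length is at most $r_{\max}$. The path $\pi$ starts in $F_R$ and ends in $F_Q$. Consecutive vertices of $\pi$ lie either in the same Voronoi cell or in two cells joined by an edge of $G'$. After contraction, the sequence of cells visited by $\pi$ is therefore a walk in $\HH$ from $R$ to $Q$. It remains to check that every cell visited by $\pi$ belongs to a ball of $\cR_v$.

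So let $u$ be any vertex on $\pi$ and let $F_{P}$ be the cell containing $u$, so that $d(x_P,u) \le d(x_R,u)$. Because $\pi$ is a shortest path through $u$, we have $d(x_R,u) + d(u,v) = d(x_R,v) \le r_{\max}$. The triangle inequality along directed paths then gives
\[ d(x_P,v) \le d(x_P,u) + d(u,v) \le d(x_R,u)+d(u,v) = d(x_R,v) \le r_{\max}. \]
Hence $v \in P$, that is, $P \in \cR_v$. All balls now have the common radius $r_{\max}$ measured from the nodes $x_P$, so containment of $v$ is exactly the condition $d(x_P,v)\le r_{\max}$. This shows every cell visited by $\pi$ belongs to $\cR_v$, which completes the argument.

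The only delicate step is the bookkeeping for the reduction to equal radii and directed distances. Distances must be measured from the centers $x_P$ toward $u$ and $v$, in the same direction used to define both the balls and the Voronoi cells. The subpath of $\pi$ from $u$ to $v$ must also be a directed path, which it is, being a suffix of a shortest path. Once the orientations are fixed consistently, the inequality above is routine. Ties between cells are excluded by the perturbation, and the vertices removed from $G'$ do not matter, because every vertex on $\pi$ lies within distance $r_{\max}$ of $x_R$ and was therefore kept.
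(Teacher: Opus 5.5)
Your proof is correct and follows essentially the same route as the paper's. Both trace a shortest path from each $x_R$ to $v$ and use $d(x_P,v) \le d(x_P,u)+d(u,v) \le d(x_R,v) \le r_{\max}$ to show that every Voronoi cell the path crosses belongs to a ball containing $v$. The only differences are that you argue directly where the paper argues by contradiction, and you spell out that the cell sequence forms a walk in $\HH$ ending at the cell of $v$.
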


\begin{proof}
    Consider any vertex $v$ which is  contained in at least two balls in $\cR$. Vertex $v$ must be in the Voronoi cell $F_R$ of some ball $R$ by construction of $G'$. 
    We claim that for any other ball $R'$ of $\cR$ that also contains $v$, there is a path from $x_{R'}$ to $v$ in $G'$, denoted as \EMPH{$\SP$}, that only passes through Voronoi cells corresponding to balls that contain $v$. 
    
    Since we have removed all vertices which have no incoming path from any ball, every vertex along $\pi$ in $G'$ must be in some cell. 
    %
    Assume that some vertex $u$ along $\pi$ is contained in some Voronoi cell $F_{Q}$ corresponding to a ball $Q$ that does not contain $v$. 
    By definition of Voronoi cell, we have that $d(x_{Q}, u) \leq d(x_{R'},u)$. By the definition of shortest paths, we know that $d(x_{R'},v) = d(x_{R'},u) + d(u,v)$ . By triangle inequality, we then have the following:
    \[
    d(x_{Q},v) \leq d(x_{Q},u) + d(u,v) \leq d(x_{R'},u) + d(u,v) = d(x_{R'},v).
    \]
    But, $d(x_{R'}, v) \leq r_{\max}$ and $d(x_{Q},v) > r_{\max}$ since $v$ is in $R'$ but not in $Q$,
    contradiction.
    %
\end{proof}


\begin{proof}[Proof of Theorem~\ref{thm:dual_support}]
    Construct $\HH$ as described above. 
    By Lemma \ref{lma:dual_is_a_support}, we know that $\HH$ satisfies the definition of a dual support for system $(V_G,\cR)$. 
    It remains to show that if $G$ has no minor $K$ with minimum degree two, then neither can $\HH$. 
    This is done by arguing that $G'$ has no such minors, and then the fact that $\HH$ is a minor of $G'$ concludes the proof.

    When constructing $G'$, we only perform two types of operations that would affect the existence of minors: deleting vertices and adding in nodes $\set{x_R}$ of degree one. 
    Let $K$ be a minor model (as a collection of disjoint vertex subsets called \EMPH{supernodes}) of $G'$ that has minimum degree two. 
    If $K$ only contains vertices but not nodes in $G'$, then $K$ must appear in $G$ as well (even though $G'$ was obtained by removing some vertices from $G$).
    So 
    at least one supernode $\eta$ of $K$ must contain some degree-1 $x_R$ that was added in $G'$.

    For an edge $(\eta,\eta')$ to exist in $K$, there must be a vertex/node $u \in \eta$ and a vertex/node $u' \in \eta'$ such that $(u,u')$ is an edge of $G'$. For $\eta$ containing $x_R$, $x_R$ must not be the only vertex/node in $\eta$, since $\eta$ have degree at least 2. 
    Because $\eta$ is connected, the neighbor of $x_R$ is also in $\eta$ and the (sole) edge incident on $x_R$ must be contracted. 
    This, however, means that there can be no edge $(x_R,u')$ in $G'$ that corresponds to an edge $(\eta,\eta')$ in $K$, so there is a minor $K'$ isomorphic to $K$ where $x_R$ was simply deleted and thus $x_R$ is not in $\eta$. 
    Since this holds for all $x_R$, there must be some minor of $G$ isomorphic to $K$,
    a contradiction.
    %
\end{proof}



\section{Shallow Cell Complexity of Balls in Planar Graphs}
\label{S:shallow-cell}


Shallow cell complexity was introduced by Chan \etal~\cite{cgks-wcpgs-2012} as a combinatorial analog of union complexity.
We introduce the terminology within our context of distance balls on graphs.

Define the \EMPH{hit set} of a vertex $v$ with respect to $\cR$ to be the subset of balls in $\cR$ which contain $v$.
A \EMPH{cell} of the system $(V_G,\cR)$ is a set of vertices in $V_G$ that have the same hit set of incident balls. 
\emph{Cell complexity} is simply the number of cells in $(V_G,\cR)$, where as \emph{shallow} cell complexity counts the number of cells of a particular \EMPH{depth $k$} in $\cR$, denoted \EMPH{$\cell_k$}, the equivalence classes of vertices whose hit set has size $k$. 
We say the set system has \EMPH{shallow cell complexity $f(n,k)$} if for depth $k$, there are at most $f(n,k)$ cells.
Our goal for the section is to show that for a planar graph $G$, the shallow cell complexity for any system of balls $(V_G,\cR)$ is $O(|\cR|\cdot k^2)$ (\Cref{thm:shallow_cell_planar}). 
%
Just as in the support constructions in \S\ref{sec:dual}, we again reduce to the case where \emph{every ball in our system $\cR$ has the same radius}.


\begin{proof}[Proof of \Cref{thm:shallow_cell_planar}]
We use two key lemmas proven later in the subsections. 
The first claims that for every cell $c$ of depth $k$, there is a unique encoding of $c$, $\langle \alpha(c), \beta(c), \gamma(c) \rangle$, where each member of the tuple is a ball of $\cR$ (Lemma~\ref{lma:encoding} in Section~\ref{S:unique-encoding}). Importantly, the encoding is unique among cells of depth $k$, not universally.  
This naively implies that there is at most $O(|\cR|^3)$ cells of depth $k$. Using the standard Clarkson–Shor arguments argument, we can improve this to $O(|\cR| \cdot k^2)$ if the number of cells of depth (at most) $3$ is $O(|\cR|)$ (Lemma~\ref{lma:depth_3_linear} in Section~\ref{S:depth-3-cells}). 
More precisely: Sample the regions of $\cR$ with probability $1/k$.
A cell $c$ with unique encoding 
appears in the sample as depth-3 cell with probability at least $(ek)^{-3}$.
The expected number of depth-3 cells in the sample is at most $O(|\cR|/k)$, so the number of depth-$k$ cells in $\cR$ is at most $O(|\cR| \cdot k^2)$.
\end{proof}


    

\subsection{Unique 3-Site Encoding of Cells}
\label{S:unique-encoding}

We begin by assigning each cell of $\cell_k$ an \emph{unique} encoding for each fixed depth $k \ge 3$. 
This encoding will depend on the set of incident balls corresponding to $c$. Specifically, the encoding will be formed by a 3-tuple of regions in $\cR$, which we will refer to as \EMPH{$\langle \alpha, \beta, \gamma \rangle$}. 
In assigning the encoding, we do not modify to the set system at all. Instead, we fix an arbitrary vertex \EMPH{$v_c$} for cell $c$ as its \EMPH{representative}. 
We denote the cell containing $v$ as \EMPH{$\cell(v)$}.

For a cell $c$, we begin by designating its \emph{$\seq{\alpha, \beta}$-type}. 
We define \EMPH{$\alpha(v)$} as the furthest ball from $v$ that contains $v$. 
Similarly, we define \EMPH{$\beta(v)$} as the second furthest ball from $v$ that contains $v$. 
For cell $c$, its \EMPH{$\seq{\alpha, \beta}$-type} is simply $\seq{\alpha(v_c), \beta(v_c)}$.
(Notice that $\alpha(v_c)$ and $\beta(v_c)$ might be different if one chooses a different $v_c$ from cell $c$, and thus the $\seq{\alpha, \beta}$-type of $c$ depends on its representative.)
Let \EMPH{$\pab[v]$} be the union of $\SP(\alpha(v), v)$ and $\SP(\beta(v), v)$, 
where \EMPH{$\SP(x,y)$} is the shortest path from $x$ to $y$ in $G$.

\begin{figure}[ht]
    \centering
    \includegraphics[width=0.8\linewidth,page=9]{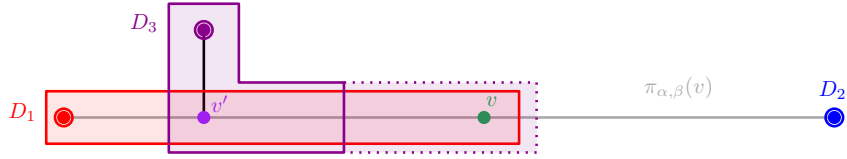}
    \caption{\internallinenumbers
    Given two vertices $v,v'$ satisfying $\alpha(v) = \alpha(v') = D_1$, $\beta(v) = \beta(v') = D_2$ and $v' \in \SP(D_1,v) $, any ball $D_3$ containing $v'$ must also contain $v$.}
    \label{fig:ancestor subsets}
\end{figure}

\begin{lemma}\label{lma:ancestor_subsets}
    For a vertex $v$ with $\alpha(v) = D_1$ and $\beta(v) = D_2$, every vertex $v'$ in $\pi_{\alpha,\beta} (v)$ with $\alpha(v') = D_1$ and $\beta(v') = D_2$ satisfies $\hit(v') \subseteq \hit(v)$. 
\end{lemma}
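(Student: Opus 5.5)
The plan is a direct triangle-inequality argument in the equal-radius reduction. As in \S\ref{sec:dual}, every ball $D$ is centered at its node $x_D$ with common radius $r_{\max}$, so $D \in \hit(u)$ exactly when $d(x_D,u) \le r_{\max}$. Distances are always measured \emph{from} centers, which is the direction used in the shortest paths $\SP(D_1,v)$ and $\SP(D_2,v)$, so nothing needs symmetry of $d$. We may also assume the perturbation makes all center-to-vertex distances distinct, so ``furthest'' and ``second furthest'' are well defined.

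First, the balls $D_1$ and $D_2$ themselves belong to $\hit(v)$, since by definition $\alpha(v)=D_1$ and $\beta(v)=D_2$ are balls containing $v$. So fix any other ball $D_3 \in \hit(v')$ with $D_3 \notin \{D_1, D_2\}$. Since $\alpha(v') = D_1$ and $\beta(v') = D_2$ are the two furthest balls (from $v'$) among those containing $v'$, and $D_3$ also contains $v'$, we get
\[
d(x_{D_3}, v') \le d(x_{D_2}, v') \le d(x_{D_1}, v').
\]

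Next we split on which of the two paths forming $\pab[v]$ contains $v'$. Suppose $v'$ lies on $\SP(D_i, v)$ for some $i \in \{1,2\}$. Because $v'$ is on a shortest path from $x_{D_i}$ to $v$, the subpath property gives $d(x_{D_i}, v) = d(x_{D_i}, v') + d(v', v)$. Combining this with the triangle inequality and the displayed bound yields
\[
d(x_{D_3}, v) \le d(x_{D_3}, v') + d(v', v) \le d(x_{D_i}, v') + d(v', v) = d(x_{D_i}, v) \le r_{\max},
\]
where the last inequality holds because $D_i \in \hit(v)$. Hence $D_3 \in \hit(v)$, and therefore $\hit(v') \subseteq \hit(v)$.

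The only delicate point is the case $i=2$. There we need $d(x_{D_3},v') \le d(x_{D_2},v')$ rather than just $d(x_{D_3},v') \le d(x_{D_1},v')$. This is exactly why the hypothesis fixes both $\alpha(v')$ and $\beta(v')$: $D_3$ is then guaranteed to be no further from $v'$ than the second-furthest ball. Beyond this, I expect no real obstacle, apart from checking that the reduction to equal radii and the tie-breaking perturbation preserve the hit sets, which \S\ref{sec:dual} already ensures.
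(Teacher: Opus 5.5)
Your proof is correct and is essentially the paper's argument. The paper argues by contradiction: a ball $D_3$ containing $v'$ but not $v$ has its center closer to $v'$ than both $D_1$ and $D_2$, and since $v'$ lies on $\SP(c_{D_1},v)$ or $\SP(c_{D_2},v)$, $D_3$ is then closer to $v$ than one of them and so contains $v$ by equal radii. You carry out the same triangle-inequality and subpath computation directly, and you make explicit why $\beta(v') = D_2$ is needed in the $i=2$ case.
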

\begin{proof}
    Assume the contrary.
    Then, there is some vertex $v'$ which is contained within some ball $D_3$ such that $v \not \in D_3$. 
    By definition of $\alpha$ and $\beta$, it must be that the center of $D_3$ is closer to $v'$ than $D_1$ and $D_2$. However, since $v'$ is either in $\SP(c_{D_1}, v)$ or $\SP(c_{D_2}, v)$, this means that $D_3$ is closer to $v$ than one of $D_1$ or $D_2$, and thus $D_3$ must contain $v$,
    a contradiction by the assumption that all balls have equal radius.  
    (See Figure~\ref{fig:ancestor subsets}.)
\end{proof}


The third component of our encoding, \EMPH{$\gamma$}, can now be defined. 
For $k=3$, $\gamma$ is simply the remaining ball after removing $\alpha$ and $\beta$. For $k \geq 4$, we define $\gamma$ procedurally after fixing $\alpha, \beta$.
Denote the set of all cells of depth $k$ with $\alpha(v_c) = D_1$ and $\beta(v_c) = D_2$ as \EMPH{$C_k(D_1,D_2)$}. We additionally collect the representatives $v_c$ of each $c \in C_k(D_1,D_2)$ into a corresponding set \EMPH{$V_k(D_1,D_2)$}.
%
Let \EMPH{$T_{D}(S)$} be the shortest path tree from the center $x_D$ of a ball $D$ to a set of vertices $S$.
Consider the shortest path tree $T_{D_1} (V_k(D_1,D_2))$. 
We will process all $v_c \in V_k(D_1,D_2)$ using the plane drawing of $T_{D_1}(V_k(D_1,D_2))$. 

Perform a depth-first traversal starting at $x_D$, processing children in clockwise order. 
The depth-first traversal $T_{D_1}(V_k(D_1,D_2))$ of  gives a total ordering of the vertices in $V_k(D_1,D_2)$, which also induces a total ordering $\sigma$ on the corresponding cells $C_k(D_1,D_2)$. 
For brevity, we will index the cells based on their order in $\sigma$ (the $i$th cell as $c_i$) in a cyclic manner.
%
%
For the $i$th cell $c_i$ in the total ordering, we assign \EMPH{$\gamma(c_i)$} to be an arbitrary ball $D$ with $D \in \hit(v_{c_i})$ and $D \not \in \hit(v_{c_{i-1}})$. 
Since $|\hit(v_{c_i})| = |\hit(v_{c_i-1})| = k$ and $\hit(v_{c_i}) \neq \hit(v_{c_i-1})$ (by the definition of cell), such a $D$ must always exist, and thus we can assign $\gamma(c_i)$ to all cells.

We now show that this definition of $\gamma$ is sufficient to ensure that each distinct cell $c$ of size $k$ has a unique encoding. 
First, we use Lemma \ref{lma:ancestor_subsets} to show that the representative vertices of cells with the same $\alpha$ and $\beta$ cannot form ancestor relationships in $T_{D_1}(V_k(D_1,D_2))$.

\begin{observation} \label{obs:no_ancestor}
    For any two distinct cells $c,c' \in  C_k(D_1, D_2)$, $v_c$ cannot be an ancestor of $v_{c'}$ in $T_{D_1} (V_k(D_1,D_2)$ or $T_{D_2} (V_k(D_1,D_2)$.
\end{observation}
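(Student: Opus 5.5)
We argue by contradiction and reduce to \Cref{lma:ancestor_subsets}. Suppose $c \neq c'$ lie in $C_k(D_1,D_2)$ and $v_c$ is an ancestor of $v_{c'}$ in $T_{D_1}(V_k(D_1,D_2))$. Since $T_{D_1}$ is a shortest path tree rooted at the center $x_{D_1}$, the tree path from the root to $v_{c'}$ is a shortest path $\SP(D_1, v_{c'})$, and $v_c$ lies on it. We fix the shortest paths consistently (the perturbation makes them unique), so that this tree path is exactly the path used in $\pab[v_{c'}]$. Hence $v_c \in \pab[v_{c'}]$.

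Next, by the definition of $C_k(D_1,D_2)$, both representatives have the same type: $\alpha(v_c) = \alpha(v_{c'}) = D_1$ and $\beta(v_c) = \beta(v_{c'}) = D_2$. Applying \Cref{lma:ancestor_subsets} with $v = v_{c'}$ and $v' = v_c$ gives $\hit(v_c) \subseteq \hit(v_{c'})$. Both cells have depth exactly $k$, so $|\hit(v_c)| = |\hit(v_{c'})| = k$, and the inclusion forces $\hit(v_c) = \hit(v_{c'})$. Vertices with equal hit sets lie in the same cell, so $c = c'$, a contradiction.

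The case of $T_{D_2}(V_k(D_1,D_2))$ is symmetric. The tree path from $x_{D_2}$ to $v_{c'}$ is $\SP(D_2, v_{c'}) \subseteq \pab[v_{c'}]$, and \Cref{lma:ancestor_subsets} applies unchanged, since it only requires $v'$ to lie in $\pab[v]$ and to share the same $\alpha,\beta$ values.

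The main point to verify is that the tree paths in $T_{D_1}$ and $T_{D_2}$ coincide with the shortest paths defining $\pab$. If shortest paths were not unique, we would instead take $\pab$ to be defined using the tree paths themselves. This is harmless, because the proof of \Cref{lma:ancestor_subsets} uses only the fact that $v'$ lies on some shortest path from the relevant center to $v$. A degenerate case is $v_c = v_{c'}$, but that cannot happen for distinct cells, since each vertex belongs to exactly one cell.
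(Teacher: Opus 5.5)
Your proof is correct and follows the paper's argument: the tree path to $v_{c'}$ is a shortest path from the center of $D_1$ (or $D_2$), so \Cref{lma:ancestor_subsets} gives an inclusion of hit sets, equal cardinality $k$ turns it into equality, and hence $c=c'$. Your inclusion $\hit(v_c)\subseteq\hit(v_{c'})$ is the direction the lemma actually gives (the paper writes the reverse, a harmless slip), and your remark that the lemma only needs $v_c$ to lie on \emph{some} shortest path from the center handles non-unique shortest paths.
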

\begin{proof}
    First consider $T_{D_1} (V_k(D_1,D_2)$. If $v_c$ were an ancestor of $v_{c'}$, then $v_c$ would be in $\SP(D_1,v_{c'})$. This would imply that $\hit(v_{c'}) \subseteq \hit(v_c)$ by Lemma \ref{lma:ancestor_subsets}. However, since $c$ and $c'$ have depth $k$, $|\hit(v_{c})| = |\hit(v_{c'})| = k$. 
    The subset relationship would hence require that $\hit(v_{c}) = \hit(v_{c'})$, and thus $c$ and $c'$ are the same class, which is a contradiction. A symmetric argument applies for $T_{D_2} (V_k(D_1,D_2)$.
\end{proof}

This means that for any $c,c'$ with the same $\alpha$ and $\beta$, $\SP(\alpha(v_c), v_c) \not \subseteq \SP(\alpha(v_c),v_{c'})$. 

\begin{restatable}{lemma}{encoding}\label{lma:encoding}
        For $k\geq 3$, every cell $c$ of depth $k$ has a unique encoding $\langle \alpha(c), \beta(c), \gamma(c) \rangle  \in \cR^3$.
\end{restatable}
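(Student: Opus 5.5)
The plan is to argue by contradiction. Fix $k\ge 3$ and suppose two distinct cells $c\neq c'$ of depth $k$ receive the same encoding $\langle D_1,D_2,D\rangle$. Both cells then lie in $C_k(D_1,D_2)$, so they appear in the cyclic order $\sigma$ as $c=c_i$ and $c'=c_j$ with $i\neq j$. For $k=3$ the conclusion is immediate: $\gamma$ is the remaining ball, so the hit sets $\{D_1,D_2,D\}$ coincide and $c=c'$. Hence assume $k\ge 4$.

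Write $a=v_{c_i}$, $b=v_{c_{i-1}}$, $c=v_{c_j}$, $d=v_{c_{j-1}}$. By the choice of $\gamma$, $D$ contains $a$ and $c$ but neither $b$ nor $d$. Since $i\neq j$, the pairs $(b,a)$ and $(d,c)$ are two different consecutive pairs of $\sigma$. Around the tree $T_{D_1}(V_k(D_1,D_2))$ the four representatives therefore appear in the alternating cyclic pattern $d,c,\dots,b,a,\dots$. (If some of them coincide, e.g.\ $j=i-1$, the pattern degenerates and I would treat that case separately with the same argument.) By Observation~\ref{obs:no_ancestor} no representative is an ancestor of another. So the tree paths $\SP(x_{D_1},\cdot)$ to these vertices branch off at distinct points, and the DFS order equals the planar cyclic order of the corresponding root-to-leaf paths.

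\textbf{Key metric step: $D$-paths avoid the tree paths to $b$ and $d$.} I claim that the shortest path $\SP(x_D,a)$ cannot share a vertex $u$ with $\SP(x_{D_1},b)$, and similarly for every other pairing of $\{a,c\}$ with $\{b,d\}$. Suppose such a $u$ exists. Because $\alpha(a)=D_1$ is the farthest ball containing $a$, we have
\[
d(x_{D_1},a)\ge d(x_D,a)=d(x_D,u)+d(u,a),
\]
and combining this with $d(x_{D_1},a)\le d(x_{D_1},u)+d(u,a)$ gives $d(x_D,u)\le d(x_{D_1},u)$. Then
\[
d(x_D,b)\le d(x_D,u)+d(u,b)\le d(x_{D_1},u)+d(u,b)=d(x_{D_1},b)\le r_{\max},
\]
so $D$ contains $b$, a contradiction. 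The same argument works with $D_2$ and the tree $T_{D_2}$, which gives a second family of forbidden intersections if it is needed.

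\textbf{Topological step, and the main obstacle.} Let $K=\SP(x_D,a)\cup\SP(x_D,c)$. This is a connected set joining $a$ to $c$, and by the key step it is disjoint from the tree paths to $b$ and $d$. The closed curve formed by the tree paths to $a$ and $c$ together with $K$ should separate $b$ from $d$ by the alternating cyclic order, and I would then derive a contradiction. The catch is that $K$ may go ``around'' the outside of the tree. Alternation alone therefore does not yet force a crossing; I expect this to be the hardest step.

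My first attempt would exploit consecutiveness more strongly. Since $b,a$ are consecutive in $\sigma$, the wedge of the plane between their tree paths contains no other representative of $C_k(D_1,D_2)$, and the same holds for the wedge between $d$ and $c$. I would use the curve through $x_D$ to show that one of these empty wedges must contain the branch of another representative, which contradicts consecutiveness. If that fails, I would additionally bring in the $D_2$-paths from $\pab$, which by the key step also avoid $K$, to close the separating cycle on the other side.
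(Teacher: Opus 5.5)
Your setup and your metric step are correct, and the metric step is exactly the paper's two-case argument: a vertex shared by a $D$-path to $v_{c_i}$ or $v_{c_j}$ and a $D_1$-path (resp.\ $D_2$-path) to $v_{c_{i-1}}$ or $v_{c_{j-1}}$ forces $D$ to contain $v_{c_{i-1}}$ or $v_{c_{j-1}}$. The degenerate cases are immediate and need no separate argument: if $c_j=c_{i\pm 1}$, then $D$ would both lie in and avoid the hit set of the same representative.

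\textbf{The gap.} The genuine gap is the topological step, which you leave open, and your main attempt sets it up the wrong way round. A closed curve built from $K$ and the $T_{D_1}$-paths to $v_{c_i}$ and $v_{c_j}$ cannot produce a contradiction with the tools you have. The $T_{D_1}$-path between $v_{c_{i-1}}$ and $v_{c_{j-1}}$ passes through their lowest common ancestor, and that ancestor lies on the tree path to $v_{c_i}$ or to $v_{c_j}$. Nothing you proved keeps the $D_2$-paths to $v_{c_{i-1}}$ and $v_{c_{j-1}}$ off those tree paths either. Your metric step only controls intersections with $K$, so no available connector between $v_{c_{i-1}}$ and $v_{c_{j-1}}$ is known to avoid your curve.

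\textbf{The missing idea (the paper's construction).} Build the separating cycle out of exactly the paths your key step shows $K$ avoids.
\begin{itemize}
\item Let $a_1$ and $a_2$ be the lowest common ancestors of $v_{c_{i-1}}$ and $v_{c_{j-1}}$ in $T_{D_1}$ and $T_{D_2}$, respectively.
\item Let $F$ be the closed walk $\SP(a_1,v_{c_{i-1}})$, $\SP(v_{c_{i-1}},a_2)$, $\SP(a_2,v_{c_{j-1}})$, $\SP(v_{c_{j-1}},a_1)$.
\item Both versions of your key step (the $D_1$ version and the $D_2$ version) show that the connected set $K=\SP(x_D,v_{c_i})\cup\SP(x_D,v_{c_j})$ shares no vertex with $F$. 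In the plane graph $G$, $K$ therefore cannot join the two sides of $F$.
\item The contradiction comes from the traversal order $i-1<i<j-1<j$, from which the paper concludes that $v_{c_i}$ and $v_{c_j}$ lie on opposite sides of $F$.
\end{itemize}

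\textbf{A caution for the write-up.} The clockwise DFS order of $T_{D_1}$ directly controls only the $D_1$-half of $F$. The paper asserts the separation from the traversal order alone. A full argument also has to show that the $T_{D_2}$-paths do not cut across the wedges containing $v_{c_i}$ and $v_{c_j}$.
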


\begin{proof}
    If two cells have the same depth and $\alpha$ and $\beta$, they are both indexed by the same traversal order $\sigma$. 
    Consider two cells $c_i,c_j$ such that $\langle \alpha(c_i) , \beta(c_i) , \gamma(c_i) \rangle = \langle \alpha(c_j), \beta(c_j), \gamma(c_j) \rangle = (D_1,D_2,D_3)$ and $|\hit(v_{c_i})| = |\hit(v_{c_j})| = k$. 
    
    We need to consider how $c_i$ and $c_j$ were assigned the same $\gamma$.
    Recall that we force adjacent cells $c_{i-1},c_{i}$ to not just have different $\gamma$ assignments, but that the $\gamma(c_i)$ not contain $\gamma(c_{i-1})$. 
    This forces $v_{c_{i-1}}$ and $v_{c_{j-1}}$ to not be contained by $D_3$. 
    %
    %
    Let \EMPH{$a$} be the lowest common ancestor of $v_{c_{i-1}}$ and $v_{c_{j-1}}$ in $T_{D_1} (V_k(D_1,D_2)$. Similarly, let \EMPH{$b$} be the lowest common ancestor of $v_{c_{i-1}}$ and $v_{c_{j-1}}$ in $T_{D_2} (V_k(D_1,D_2)$. 
    We consider the cycle \EMPH{$F$} formed by the union of the following shortest paths: $\SP(a,v_{c_{i-1}})$ to $\SP(v_{c_{i-1}}, b)$ to $\SP(b,v_{c_{j-1}})$ to $\SP(v_{c_{j-1}}, a)$.
    Since $G$ is planar, $F$ describes the boundary of some region \EMPH{$R(F)$}. 
    Since $i-1 < i < j-1 < j$, by the definition of our traversal one of $v_{c_i}$ or $v_{c_j}$ lies in $R(F)$ and the other lies outside. 
    (We give an example $R(F)$ as a gray region in Figure \ref{fig:encoding_crossing}.) This means that, for $D_3$ to contain both $v_{c_i}$ and $v_{c_{j}}$, $D_3$ must also contain some vertex $u'$ of $F$. 
    Assume WLOG
    that $u'$ lies on $\SP(D_3,v_{c_j})$. 
    Since $u'$ lies on $F$, it must be on one of $\SP(D_1,v_{c_{i-1}}),\SP(D_1,v_{c_{j-1}}), \SP(D_2,v_{c_{i-1}}), $ or $ \SP(D_2,v_{c_{i-1}})$. 

    \begin{itemize}
    \item If $u'$ is in $\SP(D_1,v_{c_{i-1}})$ or $\SP(D_1,v_{c_{j-1}})$, then we know that $D_3$ must be further from $u'$ than $D_1$ since $D_3$ cannot contain $v_{c_{i-1}}$ or $v_{c_{j-1}}$. However, since $\alpha (v_{c_j}) = D_1$, $D_3$ must be closer to $u'$ than $D_1$ since $u'$ is on $\SP(D_3,v_{c_j})$ and $\alpha(v_{c_j})$ is the furthest ball from $v_{c_j}$ which still contains it. 
    (See left subfigure of Figure \ref{fig:encoding_crossing}.)

    \item 
    Similarly, if $u'$ is in $\SP(D_2,v_{c_{i-1}})$ or $\SP(D_2,v_{c_{j-1}})$, then $D_3$ must be further from $u'$ than $D_2$ since $D_3$ cannot contain $v_{c_{i-1}}$ or $v_{c_{j-1}}$. However, since $\beta (v_{c_j}) = D_2$, $D_3$ must be closer to $u'$ than $D_2$ since $u'$ is on $\SP(D_3,v_{c_j})$ and $\beta(v_{c_j})$ is the second furthest ball from $v_{c_j}$, while $D_3$ is at best third furthest. 
    (See right subfigure of Figure \ref{fig:encoding_crossing}.)
    \end{itemize}
    
    \begin{figure}[ht]
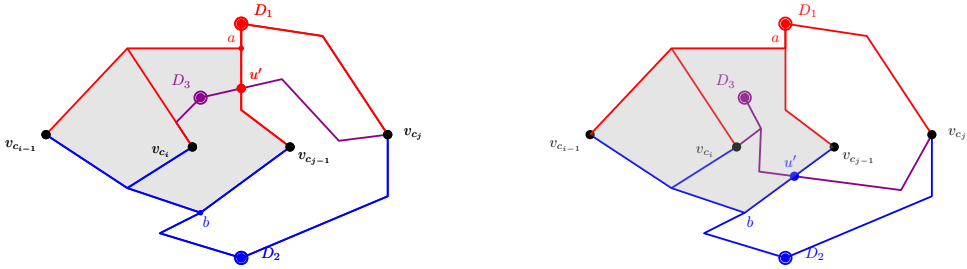

    \centering
    \includegraphics[width=0.4\linewidth,page=10]{voronoi.pdf}
    \hspace{.1\linewidth}
    \includegraphics[width=0.4\linewidth,page=11]{voronoi.pdf}
    \caption{\internallinenumbers
    If two cells $c_i,c_j$ have the same $\gamma(c_i) = \gamma(c_j) = D_3$, then there must $D_3$ must contain a vertex $u'$ of $\gamma_{\alpha,\beta}(v_{c_{j-1}})$ or $\gamma_{\alpha,\beta}(v_{c_{j-1}})$. 
    The region $R(F)$ corresponding to the simple cycle $F$ is given in gray.
    \\
    Left: $u'$ is in $\SP(D_1,v_{c_{j-1}})$. 
    Right: $u'$ is in $\SP(D_2,v_{c_{j-1}})$.}
    \label{fig:encoding_crossing}
    \end{figure}

    We get a contradiction in both cases, which means that $\gamma(c_i)$ cannot equal $\gamma(c_j)$, and thus every cell $c$ of the same depth $k \geq 3$ must have a unique encoding $\langle \alpha(c), \beta(c) , \gamma(c) \rangle$.
\end{proof}

\subsection{Depth-3 Cells}
\label{S:depth-3-cells}

Now that we have a unique encoding $\langle \alpha(v_c), \beta(v_c), \gamma(v_c) \rangle$ for every cell $c$ of depth $k$, we can proceed to give a bound on $|\cell_3|$. 

%
For each $c$ in $\cell_3$, we consider its encoding $\langle \alpha(v_c), \beta(v_c), \gamma(v_c) \rangle$. Since $c$ only has depth 3, we know that these are simply the balls of $\hit(v_c)$, placed in order of distance from $v_c$. Our first step is to select one of the three balls of $\hit(v_c)$ as \emph{$\base(v_c)$}. Next, we iterate through every ball $D$ in $\cR$ and count the number of cells that have been assigned $D$. We accomplish this by constructing a planar graph $G_D$ on the neighbors of $D$ in the Voronoi partition graph $H$. Summing over all $D$, this yields a total count of $O(|\cR|)$ cells. We start with an important lemma regarding the order of distances to balls from vertices that lie on a shortest path.

\begin{lemma}
\label{lma:overtaking}
    Given any ball $D$ of $\cR$ , a vertex $v$, some vertex $v'$ on $\SP(x_D,v)$,
    and a distinct ball $D' \in \cR$, 
    if $d(x_{D'},v') \leq d(x_D, v')$, then for all $v'' \in \SP(v',v)$, $d(x_{D'},v'') \leq d(x_D, v'')$.
\end{lemma}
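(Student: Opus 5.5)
The argument is a single application of the triangle inequality along the subpath of the shortest path $\SP(x_D,v)$; it is the same style of reasoning as in the proof of Lemma~\ref{lma:dual_is_a_support}. Fix $v'' \in \SP(v',v)$. Because $v'$ lies on the shortest path $\SP(x_D,v)$ and $v''$ lies on the portion of that path after $v'$, the subpath property of shortest paths gives
\[
d(x_D,v'') = d(x_D,v') + d(v',v''),
\]
where $d(v',v'')$ is realized by the subpath of $\SP(x_D,v)$ from $v'$ to $v''$.

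On the other side, the triangle inequality for the (possibly directed) shortest-path metric gives $d(x_{D'},v'') \le d(x_{D'},v') + d(v',v'')$. Concatenating a shortest path from $x_{D'}$ to $v'$ with the subpath from $v'$ to $v''$ is a valid directed walk, so this inequality holds even in the digraph setting. Combining it with the hypothesis $d(x_{D'},v') \le d(x_D,v')$ yields
\[
d(x_{D'},v'') \le d(x_{D'},v') + d(v',v'') \le d(x_D,v') + d(v',v'') = d(x_D,v''),
\]
which is exactly the claim.

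The only point that needs care is checking that the equality $d(x_D,v'') = d(x_D,v') + d(v',v'')$ is valid. This requires that $v''$ lies \emph{after} $v'$ on the path $\SP(x_D,v)$, which is what the notation $\SP(v',v)$ means: the subpath of $\SP(x_D,v)$ from $v'$ to $v$. One also uses that every subpath of a shortest path is itself a shortest path, a fact that holds for nonnegative edge lengths (including the augmented edges $x_R \to c_R$ of $G'$ and the perturbed weights). If one also wants strict inequalities to be preserved, the same chain works verbatim with $<$ in place of $\le$. The perturbation that ensures unique nearest centers is not needed for this lemma.
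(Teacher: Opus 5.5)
Your proof is correct and is essentially identical to the paper's: it combines the subpath identity $d(x_D,v'') = d(x_D,v') + d(v',v'')$, the triangle inequality $d(x_{D'},v'') \le d(x_{D'},v') + d(v',v'')$, and the hypothesis $d(x_{D'},v') \le d(x_D,v')$. Your extra remarks on directed walks and on the strict-inequality variant are accurate, though the lemma does not need them.
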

\begin{proof}
    For any $v'' \in \SP(v',v)$, because $v'$ is between $x_D$ and $v''$ on $\SP(x_D,v)$,  $d(x_D,v'') = d(x_D,v') + d(v',v'')$. 
    We also know that $d(x_{D'}, v'') \leq d(x_{D'},v') + d(v',v'')$. Since $d(x_{D'},v') \leq d(x_D,v')$, we can simply plug in to get $d(x_{D'}, v'') \leq d(x_{D},v') + d(v',v'') = d(x_D,v'') $.
\end{proof}


\begin{observation}\label{obs:stays_in_3_cells}
    For a cell $c$ and ball $D' \in \hit(v_c)$, all vertices $v$ in $\SP(x_D,v_c)$ must lie in the Voronoi cell of a ball in $\hit(v_c)$.
\end{observation}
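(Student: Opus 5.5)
The plan is to reuse the argument of Lemma~\ref{lma:dual_is_a_support} with $v_c$ in the role of $v$. I read the $D$ in the statement as the ball $D'\in\hit(v_c)$, so the path in question is $\SP(x_{D'},v_c)$. As in the rest of this section, I work in the equal-radius graph $G'$, where every ball has center $x_R$ and radius $r_{\max}$, and the perturbation makes every vertex's nearest center unique.

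The first step is to check that every vertex $u$ on $\SP(x_{D'},v_c)$ lies in some Voronoi cell at all, that is, that $u$ was not deleted when forming $G'$. Since $u$ is on a shortest path from $x_{D'}$ to $v_c$ and $D'$ contains $v_c$, we get $d(x_{D'},u)\le d(x_{D'},v_c)\le r_{\max}$. Hence $u\in D'$, so $u$ survives the deletion step, and the Voronoi partition $\cF_\cR$ assigns it to a unique cell $F_Q$.

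The main step is to show that $Q\in\hit(v_c)$. I would argue by contradiction: suppose $Q\notin\hit(v_c)$, so $d(x_Q,v_c)>r_{\max}$. Membership $u\in F_Q$ gives $d(x_Q,u)\le d(x_{D'},u)$. Because $u$ lies on $\SP(x_{D'},v_c)$, we have $d(x_{D'},v_c)=d(x_{D'},u)+d(u,v_c)$. The triangle inequality then yields
\[
d(x_Q,v_c)\le d(x_Q,u)+d(u,v_c)\le d(x_{D'},u)+d(u,v_c)=d(x_{D'},v_c)\le r_{\max},
\]
which contradicts $d(x_Q,v_c)>r_{\max}$. Equivalently, this is Lemma~\ref{lma:overtaking} applied with $v'=u$ and $v''=v_c$, combined with the equal-radius reduction.

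I do not expect a real obstacle here. The only points needing care are the well-definedness of Voronoi cells along the path, which the first step handles, and the direction of the inequalities in a digraph. The shortest path runs from $x_{D'}$ to $v_c$, and the triangle inequality $d(x_Q,v_c)\le d(x_Q,u)+d(u,v_c)$ uses the directed distances in the same orientation, so the argument is valid for digraphs as well.
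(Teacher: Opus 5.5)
Your proof is correct and takes essentially the paper's approach. The paper also supposes that a vertex of $\SP(x_D,v_c)$ lies in the Voronoi cell of some ball $D'\notin\hit(v_c)$, and uses the overtaking inequality (Lemma~\ref{lma:overtaking}) to conclude that $D'$ would then contain $v_c$. Your reading of the $D$ versus $D'$ typo matches the intended statement. Your extra check that every vertex on the path survives the deletion step in $G'$ is a small piece of care that the paper's proof skips here; it is handled only in Lemma~\ref{lma:dual_is_a_support}.
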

\begin{proof}
    Consider a ball $D' \not \in \hit(v_c)$ for which a vertex $v$ of $\SP(x_D,v_c)$ lies in the Voronoi cell $F_D$. If this is the case, then all the vertices after $v$ in $\SP(x_D,v_c)$ will be closer to $D'$ than they are to $D$, and since $D$ contains $v_c$, $D'$ must also then contain $v_c$, which is a contradiction.
\end{proof}

This means that the shortest paths from the centers of the balls hit by $v_c$ to the vertex $v_c$ must stay within the Voronoi cells of $\hit(v_c)$. This observation, paired with Lemma~\ref{lma:overtaking}, allows us to define \EMPH{$\base(v_c)$} based on $\SP(x_{\alpha(v_c)}, v_c)$. We consider two cases:

\begin{itemize}
\item
\textit{Case 1: Some vertex $v$ in $\SP(x_{\alpha(v_c)},v_c)$  lies in $F_{\beta(v_c)}$.~} 
Assign $\base(v_c) \gets \beta(v_c)$. 

\item
\textit{Case 2: No vertex $v$ in $\SP(x_{\alpha(v_c)},v_c)$ lies in $F_{\beta(v_c)}$.~} 
Assign $\base(v_c) \gets \gamma(v_c)$. 
\end{itemize}

\begin{figure}[ht]
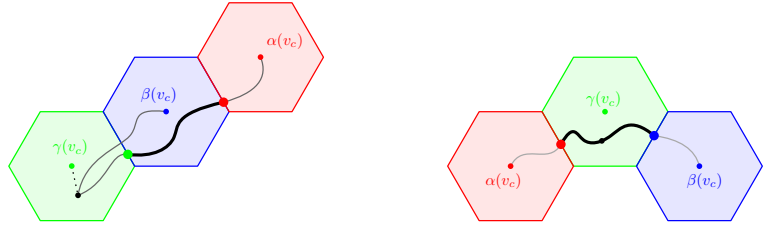

    \centering
    \includegraphics[width=0.3\linewidth,page=19]{voronoi.pdf}
    \hspace{.1\linewidth}
    \includegraphics[width=0.3\linewidth,page=18]{voronoi.pdf}
    
    \caption{\internallinenumbers
    Showing the subwalk of Lemma~\ref{lma:subwalk_new} and how to select $\base$.
    Left: Case 1.
    Right: Case 2.}
    \label{fig:subwalk_defintion}
\end{figure}

Now that we have assigned every cell to a ball, we consider a ball $D$ and the set \EMPH{$\resident(D)$}, which is the set of cells $c$ with $\base(c) = D$. We will consider the cells of $\resident(D)$ in relation to the neighbors of $D$ within the Voronoi partition graph $H$.

\begin{lemma} \label{lma:subwalk_new}
    For a ball $D$ and every cell $c$ of $\resident(D)$, there is a subwalk $\pi_c$ of $\pab[v_c]$ with the (nonempty set of) interior vertices lying in $D$ and
    the end vertices lying in distinct neighbors $D',D''$ of $D$ in $H$. 
\end{lemma}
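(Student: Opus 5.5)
The plan is to read the walk $\pab[v_c]$ as a route from the center $x_{\alpha(v_c)}$ to $v_c$ followed by the reverse of the route from $v_c$ back to $x_{\beta(v_c)}$. I will then locate a maximal stretch of this walk that lies in the Voronoi cell $F_D$ of the base ball $D=\base(v_c)$, and show that this stretch is entered from one neighboring cell and exited into a different one. Here ``lying in $D$'' is read as lying in the Voronoi cell $F_D$, i.e.\ in the supernode of $H$ that was contracted to $D$. By Observation~\ref{obs:stays_in_3_cells}, every vertex of $\SP(x_{\alpha(v_c)},v_c)$ and of $\SP(x_{\beta(v_c)},v_c)$ lies in $F_{\alpha}\cup F_\beta\cup F_\gamma$, writing $\alpha,\beta,\gamma$ for the three balls of $\hit(v_c)$. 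So every vertex of the walk carries one of three labels. Consecutive vertices with different labels give an edge of $H$, because $H$ is the contraction of the Voronoi partition. The start of the walk, $x_\alpha$, is labeled $\alpha$, and its end, $x_\beta$, is labeled $\beta$.

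\emph{Case 1}, where $D=\beta$. Some vertex of $\SP(x_\alpha,v_c)$ lies in $F_\beta$. Let $u$ be the first such vertex.
\begin{itemize}
\item By Lemma~\ref{lma:overtaking} applied with the pair $(\alpha,\beta)$, every vertex after $u$ on $\SP(x_\alpha,v_c)$ is at least as close to $x_\beta$ as to $x_\alpha$.
\item If the whole segment from $u$ to $v_c$ stayed in $F_\beta$, then $v_c\in F_\beta$.
\item Otherwise the segment leaves $F_\beta$ into some cell. Since the segment is $\beta$-closer than $\alpha$, that cell cannot be $F_\alpha$, so it must be $F_\gamma$.
\end{itemize}
The vertex just before $u$ lies in $F_\alpha$, or in $F_\gamma$ if the walk passed through $\gamma$ first. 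I take the maximal $F_\beta$-run beginning at $u$; its interior is nonempty since it contains $u$. The walk enters this run from the predecessor of $u$ and exits either along the $\alpha$-path or, after reaching $v_c$, into the $\beta$-path.

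The delicate point is arranging that the entering and exiting cells are distinct. My plan is to choose the run adjacent to a label change from $\alpha$. Then use Lemma~\ref{lma:overtaking} with the pair $(\beta,\gamma)$, or the symmetric pair, to rule out returning to $\alpha$ after entering $F_\beta$. If instead the run would exit into the same cell it came from, I extend the analysis along the $\beta$-path. That path is monotone by Lemma~\ref{lma:overtaking}: once $\SP(x_\beta,v_c)$ enters another cell it never re-enters $F_\beta$. Reversed toward $x_\beta$, the walk therefore eventually enters $F_\beta$ for good, and this supplies a different boundary pattern.

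\emph{Case 2}, where $D=\gamma$. The $\alpha$-path avoids $F_\beta$, so its labels lie in $\{\alpha,\gamma\}$. Similarly I expect to show the $\beta$-path avoids $F_\alpha$ in a suitable portion, using that $\beta$ is closer to $v_c$ than $\alpha$ together with Lemma~\ref{lma:overtaking}. The walk starts labeled $\alpha$ and ends labeled $\beta$. The $\alpha$-portion has no $\beta$ label, so the first $\beta$ label appears on the $\beta$-path. I then argue that the transition from $\alpha$ to $\beta$ must pass through $F_\gamma$ rather than jumping directly. If $v_c\in F_\gamma$ this is immediate. Otherwise some vertex of the $\alpha$-path is in $F_\gamma$, and Lemma~\ref{lma:overtaking} keeps the tail of that path out of $F_\alpha$. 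A maximal $\gamma$-run flanked by an $\alpha$-labeled vertex on one side and a $\beta$-labeled vertex on the other is then the required subwalk, with $D'=\alpha$ and $D''=\beta$.

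I expect the main obstacle to be the distinctness of $D'$ and $D''$, together with excluding a direct $\alpha$--$\beta$ adjacency on the walk in Case 2. The monotonicity supplied by Lemma~\ref{lma:overtaking} along each of the two shortest paths is the tool for both. The hard part is handling the junction at $v_c$, where the two monotone label sequences meet. I would do this by a short case analysis on which cell contains $v_c$.
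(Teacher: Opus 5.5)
\textbf{Case~2 is fine; Case~1 has a gap that cannot be closed.} Your Case~2 works once you state two facts you left hedged.
\begin{itemize}
\item $v_c\in F_\gamma$ always holds. The center nearest to $v_c$ contains $v_c$ because all radii are equal, so that center is $\gamma$.
\item $\SP(x_\beta,v_c)$ never meets $F_\alpha$. A vertex of it in $F_\alpha$ would, by the strict form of Lemma~\ref{lma:overtaking}, give $d(x_\alpha,v_c)<d(x_\beta,v_c)$, contradicting that $\alpha$ is furthest.
\end{itemize}

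The gap is in Case~1, exactly where you say it is delicate. Lemma~\ref{lma:overtaking} only controls balls overtaking the ball whose shortest path you are walking on. So along $\SP(x_\alpha,v_c)$ it forbids returning to $F_\alpha$, but it gives no monotonicity between $F_\beta$ and $F_\gamma$. Applying it to the pair $(\beta,\gamma)$ on that path is not justified. The two differences $d(x_\beta,\cdot)-d(x_\alpha,\cdot)$ and $d(x_\gamma,\cdot)-d(x_\alpha,\cdot)$ are both non-increasing along the path, but they can cross repeatedly.

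So whenever the $\alpha$-path meets $F_\beta$ but enters $F_\gamma$ first, it reads $F_\alpha\cdots F_\alpha\,F_\gamma\cdots F_\gamma\,F_\beta\cdots F_\beta\cdots F_\gamma$. In that case:
\begin{itemize}
\item no $F_\beta$-run is adjacent to an $F_\alpha$-vertex;
\item every $F_\beta$-run on the $\alpha$-path has $F_\gamma$ on both sides;
\item your fallback along the $\beta$-path gives nothing new. By the facts above that path reads $F_\beta\cdots F_\beta\,F_\gamma\cdots F_\gamma$. Its $F_\beta$-run is therefore bounded by $F_\gamma$ on one side and, on the other, by the walk's endpoint $x_\beta\in F_\beta$, which is not a neighbor.
\end{itemize}

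\textbf{The statement is false under the stated rule for $\base$.} Take the planar graph with:
\begin{itemize}
\item a path $p_0p_1\cdots p_{10}$ with unit edges;
\item a vertex $s$ joined to $p_5$ and $p_{10}$ by edges of length $3.5$;
\item a vertex $q$ joined to $p_8$ by an edge of length $2$.
\end{itemize}
Use three balls of radius $10$ centered at $p_0,q,s$, and take $v_c=p_{10}$ as representative; the paper allows arbitrary representatives. Since $d(p_0,v_c)=10>d(q,v_c)=4>d(s,v_c)=3.5$, these balls are $\alpha,\beta,\gamma$ respectively. The Voronoi cells are $F_\alpha=\{p_0,\dots,p_4\}$, $F_\gamma=\{p_5,p_{10},s\}$ and $F_\beta=\{p_6,\dots,p_9,q\}$. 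The unique shortest path $p_0\cdots p_{10}$ contains $p_6\in F_\beta$, so $\base(v_c)=\beta$. But $H$ has only the edges $\alpha\gamma$ and $\gamma\beta$, so $\beta$ has a single neighbor and the claimed subwalk cannot exist.

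The paper's own Case~1 proof misses the same configuration. Its subwalk runs from the last $F_\alpha$-vertex ($p_4$) to the first $F_\gamma$-vertex after $F_\beta$ ($p_{10}$), and has $p_5\in F_\gamma$ in its interior.

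\textbf{A rule that makes the lemma true.} Choose $\base(v_c)$ as the \emph{first} cell that $\SP(x_\alpha,v_c)$ enters after leaving $F_\alpha$.
\begin{itemize}
\item If that cell is $F_\beta$, the first $F_\beta$-run is preceded by $F_\alpha$. It is followed by $F_\gamma$, because the path cannot return to $F_\alpha$ and $v_c\in F_\gamma$.
\item If that cell is $F_\gamma$, the $F_\gamma$-run starting there is preceded by $F_\alpha$. It ends at an $F_\beta$-vertex, either later on the $\alpha$-path or, if it runs through $v_c$, on the $\beta$-path by the pattern above. This is exactly your Case~2 argument.
\end{itemize}
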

\begin{proof}
    If cell $c$ is a \textit{case-1} cell, we have that a vertex $v$ in $\SP(x_{\alpha(v_c)},v_c)$ which lies in  $F_{\beta(v_c)}$. By Lemma~\ref{lma:overtaking}, no vertex of $\SP(v,v_c)$ can lie in $F_{\alpha(v_c)}$. Additionally, $v_c$ must lie in $F_{\gamma(v_c)}$ by definition of $\alpha$ and $\beta$. Let $u$ be the first vertex of $\SP(v,v_c)$ which lies in $F_{\gamma(v_c)}$. Let $v'$ be last vertex of $\pab[v_c]$ which lies in $F_{\alpha(v_c)}$. The subwalk $\pab[v_c] [v',u]$ satisfies this condition.

    If cell $c$ is a \textit{case-2} cell, we have no such vertex and $\SP(x_{\alpha(v_c)},v_c)$ lies entirely in $\alpha(v_c)$ or $\gamma(v_c)$.  Let $u$ be the last vertex of $\SP(x_{\alpha(v_c)},v_c)$ which lies in $F_{\alpha(v_c)}$ and let $u'$ be the last vertex  of $\SP(x_{\beta(v_c)}, v_c)$ that lies in $F_{\beta(v_c)}$. We then can take our subpath to be $\pab[v_c][u,u']$, and this will satisfy the condition.
\end{proof}

We now consider the neighbors of $D$ in the Voronoi support $H$, \EMPH{$N_H(D)$}. In Lemma~\ref{lma:subwalk_new}, the subwalk $\pi_c$ of cell $c$ in $\resident(D)$ runs between the two balls of $\hit(v_c) \setminus D$, both of which must be in $N_H(D)$ since $H$ is simply the contraction of the Voronoi partition of $G$. 
We can now define a graph \EMPH{$G_D$} on $N_H(D)$ to count the number of cells in $\resident(D)$. For every cell $c$ of $\resident(D)$, add an edge \EMPH{$e_c$} between the balls of $\hit(v_c) \setminus D$. Since the underlying graph $G$ is planar, we can sketch a drawing of $G_D$ as follows:

To draw the curve \EMPH{$\zeta_c$} corresponding to cell $c$ of $\resident(D)$, we first consider the subpath $\pi_c$ that we get from Lemma~\ref{lma:subwalk_new}, using a plane drawing of $G_D$. 
Let $v$ be the first endpoint of $\pi_c$ lying in a ball $D_1$ and $v'$ be the second endpoint of $\pi_c$ lying in a ball $D_2$. 
We first draw the center $x_{D_1}$ in its location on any plane drawing of the underlying graph $G$. We then sketch $\zeta_c$ on $\SP(x_{D_1, v})$. 
This is followed by sketching $\pi_c$, and then finally sketching the walk $\SP(v', x_{D_2})$. Since this sketch is over a path, we will refer to this walk in $G$ as \EMPH{$\pi(\zeta_c)$}.

\begin{figure}[ht]
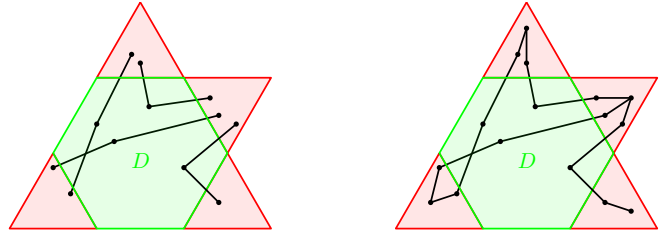

    \centering
    \includegraphics[width=0.25\linewidth,page=20]{voronoi.pdf}
    \hspace{.1\linewidth}
    \includegraphics[width=0.25\linewidth,page=21]{voronoi.pdf}
    \caption{\internallinenumbers
    Left: a set of subwalks of cells in $\resident(D)$. Right: the walks $\pi(\zeta_c)$ for each $c \in \resident(D)$.}
    \label{fig:dwrawing.}
\end{figure}

\begin{lemma}\label{lma:no_crossing}
    For a given ball $D$ in $\cR$ and two distinct cells $c$ and $c'$ in $\resident(D)$, if $\pi(\zeta_c)$ and $\pi(\zeta_{c'})$ share a common vertex $v$, then $e_c$ and $e_{c'}$ share a common endpoint $D'$.
\end{lemma}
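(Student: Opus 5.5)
The plan is to use the Voronoi structure along each walk. Every vertex of $\pi(\zeta_c)$ lies in the Voronoi cell of one of the three balls of $\hit(v_c)$, and which cell it lies in determines which endpoint of $e_c$ it certifies. First I split $\pi(\zeta_c)$ into three pieces, as in the construction:
\begin{itemize}
\item $\SP(x_{D_1},v)$, which lies entirely in $F_{D_1}$;
\item the middle subwalk $\pi_c$, whose interior lies in $F_D$;
\item $\SP(v',x_{D_2})$, which lies entirely in $F_{D_2}$.
\end{itemize}
The first and third claims follow from the fact that Voronoi cells are closed under taking prefixes of shortest paths from their center. For $\SP(x_{D_1},v)$ I would use Lemma~\ref{lma:overtaking} in contrapositive form: if some vertex before $v$ were closer to another center, then $v$ would be too, contradicting $v\in F_{D_1}$. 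The last piece is handled the same way, reading the path from $x_{D_2}$.

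With this decomposition, I take a shared vertex $u$ of $\pi(\zeta_c)$ and $\pi(\zeta_{c'})$. Since the Voronoi cells partition $V_{G'}$, $u$ lies in exactly one cell $F_Q$. If $Q\neq D$, then $u$ must lie on an outer piece of each walk, so $Q$ is an endpoint of both $e_c$ and $e_{c'}$, and we are done with $D'=Q$. The endpoints of $\pi_c$ lie in $F_{D_1}$ or $F_{D_2}$, so they fall into this case as well.

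The remaining case, and the main obstacle, is $Q=D$: the walks meet at an interior vertex $u$ of both middle subwalks. Here I would argue by contradiction. Suppose $\{D_1,D_2\}\cap\{D_1',D_2'\}=\emptyset$. The vertex $u$ lies on $\pab[v_c]$, that is, on $\SP(\alpha(v_c),v_c)$ or $\SP(\beta(v_c),v_c)$, and likewise on a shortest path toward $v_{c'}$.

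I would combine this with the distance ordering. Say $u$ lies on $\SP(x_A,v_c)$ with $A\in\hit(v_c)$ and $A\neq D$. Then $d(x_D,u)\le d(x_A,u)$ because $u\in F_D$. By Lemma~\ref{lma:overtaking}, $D$ stays at least as close as $A$ along the rest of the path to $v_c$. Thus the suffix of the walk from $u$ to $v_c$ (or to $v_{c'}$) only sees balls closer than $A$.

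Now I splice the suffix toward $v_{c'}$ onto the prefix from $x_A$, the "exchange" trick of Lemma~\ref{lma:ancestor_subsets}. I compare $d(x_A,u)+d(u,v_{c'})$ with the radius. Using the case-1/case-2 definition of $\base$ and that $D$ is not farthest (case 1) or the walk avoids $F_{\beta}$ (case 2), I aim to show that $A$ or $A'$ must contain the other representative. That would put a fourth ball into a depth-3 hit set, or force $\hit(v_c)=\hit(v_{c'})$, contradicting that $c\neq c'$ are distinct depth-3 cells.

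I expect the delicate part to be tracking which of the paths from $\alpha$ or $\beta$ contains $u$ for each cell. This gives four subcases to check against the base-selection rule. If the subcases do not close directly, I would try choosing the shared vertex $u$ extremal (for example, the first shared vertex along $\pi_c$) to simplify them.
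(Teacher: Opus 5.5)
\textbf{There is a gap in the case $Q=D$.} Your treatment of a shared vertex outside $F_D$ is correct, and it differs from the paper. The outer pieces of $\pi(\zeta_c)$ lie in $F_{D_1}\cup F_{D_2}$, because Voronoi cells are closed under shortest-path prefixes from their centers. So the unique cell $F_Q$ containing the shared vertex names a common endpoint directly. This even avoids the paper's blanket claim that every vertex of $\pi(\zeta_c)$ lies on a shortest path from $x_{D_1}$ or $x_{D_2}$ to $v_c$.

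The problem is the case $Q=D$. You rightly call it the main obstacle, but you leave it as an aim rather than a proof. The facts you derive there are $d(x_D,u)\le d(x_A,u)$ and that $D$ stays at least as close as $A$ toward $v_c$. These only bound distances from $x_D$, so they cannot bound $d(x_A,v_{c'})$. Information about $c$ alone gives $d(x_A,u)\le r-d(u,v_c)$, where $r$ is the common radius. Hence your splice yields $d(x_A,v_{c'})\le r$ only when $d(u,v_{c'})\le d(u,v_c)$. The Case~1/Case~2 rule for $\base$ plays no role in closing this, so the four-subcase analysis you anticipate is a detour.

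\textbf{The missing idea is a symmetric choice between the two cells.} The vertex $u$ also lies on $\SP(x_{A'},v_{c'})$ for some $A'\in\hit(v_{c'})\setminus\{D\}$. Swapping the roles of $c$ and $c'$ if necessary, assume $d(x_A,u)\le d(x_{A'},u)$. Then
\[
d(x_A,v_{c'})\le d(x_A,u)+d(u,v_{c'})\le d(x_{A'},u)+d(u,v_{c'})=d(x_{A'},v_{c'})\le r,
\]
so $A\in\hit(v_{c'})$. Since $\hit(v_{c'})$ consists of $D$ and the two endpoints of $e_{c'}$, and $A\neq D$, the ball $A$ is a common endpoint of $e_c$ and $e_{c'}$. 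Comparing $d(u,v_c)$ with $d(u,v_{c'})$ instead works equally well.

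This one-line exchange, phrased through Lemma~\ref{lma:overtaking}, is the paper's entire proof. It applies to any shared vertex lying on such shortest paths. So the paper needs neither your split on $Q$ nor any property of $\base$ beyond $D\in\hit(v_{c'})$.
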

\begin{proof}
    Let $e_c = (D_1,D_2)$ and $e_{c'} = (D_3,D_4)$. For $c$ $(c')$, the common vertex $v$ must lie on a shortest path from either $x_{D_1}$ or $x_{D_2}$ ($x_{D_3}$ or $x_{D_4}$) to $v_c$ ($v_{c'}$) by construction. Without loss of generality, assume $x_{D_1}$ and $x_{D_3}$. One of $x_{D_1}$ or $x_{D_3}$ must be at least as close as the other to $v_c$. Without loss of generality, assume $x_{D_1}$. By Lemma~\ref{lma:overtaking}, this implies that $D_1$ contains $v_{c'}$, and is thus in $\hit(v_{c'})$. We know that $D$ is contained within $\hit(v_{c'})$, as we only ever assign $\base(v_{c'})$ to a member of $\hit(v_{c'})$. Thus, in order for $c'$ to be depth 3, $D_1$ must be either the same as $D_3$ or $D_4$. Thus, $e_c$ and $e_{c'}$ have a common endpoint.
\end{proof}

    Lemma~\ref{lma:no_crossing} implies that $\zeta_c$ and $\zeta_{c'}$ only cross if $e_c$ and $e_{c'}$ share an endpoint. Since the only crossings occur between edges incident on the same vertex, $G_D$ can be redrawn with these crossings removed. This immediately implies the following:

\begin{lemma}\label{g_d_planar}
    $G_{D}$ is a simple planar graph.
\end{lemma}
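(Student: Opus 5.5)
The plan is to show separately that $G_D$ is simple and that it admits a plane drawing; in both parts the main tool is Lemma~\ref{lma:no_crossing}.

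\emph{Simplicity.} $G_D$ has no loops, because for each $c \in \resident(D)$ the two endpoints of $e_c$ are the two distinct balls of $\hit(v_c)\setminus\{D\}$. For parallel edges, suppose $c \neq c'$ in $\resident(D)$ give the same edge $\{D_1,D_2\}$. Then $\hit(v_c)=\hit(v_{c'})=\{D,D_1,D_2\}$, since both cells have depth $3$ and both contain $D$. This contradicts $c\neq c'$, because cells are exactly the classes of vertices with equal hit sets. So each pair of balls carries at most one edge, and the only real content of the lemma is planarity.

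\emph{Drawing.} Place each vertex $D'\in N_H(D)$ at the point $x_{D'}$ of a fixed plane drawing of $G$ (or of $G'$, with the pendant node $x_{D'}$ drawn near $c_{D'}$). Draw $e_c$ as the curve $\zeta_c$ that follows the walk $\pi(\zeta_c)$. First, I would shortcut each walk $\pi(\zeta_c)$ to a simple path, so that each curve has no self-crossings. Next, I would push every curve slightly off the drawing of $G$ into a thin tubular neighborhood of the edges and vertices it uses. After this, two curves can meet only near a common vertex or edge of $G$. By Lemma~\ref{lma:no_crossing}, this happens only when the corresponding edges $e_c,e_{c'}$ share an endpoint of $G_D$.

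\emph{Removing crossings.} The drawing now has the property that every crossing is between two edges that share an endpoint. I would invoke the standard redrawing fact: such crossings can be eliminated one at a time without creating new crossings. Concretely, if $\zeta_c$ and $\zeta_{c'}$ both leave $x_{D'}$ and cross at a point $p$, swap their initial segments from $x_{D'}$ to $p$ and then perturb near $p$. This strictly lowers the crossing count. The operation only exchanges which curve carries which piece, so every crossing that remains is still between adjacent edges, and the invariant is kept. Alternatively, one can note that a graph with a drawing in which all crossings are between adjacent edges has every pair of independent edges crossing an even number of times (namely zero), and apply the Hanani--Tutte theorem to get planarity directly.

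\emph{Main obstacle.} The hard step is making sure that sharing a vertex is the only way two curves can interact; in particular, touchings that are not transversal must not count as crossings, and a common endpoint in $G_D$ has to be the actual point where the swap is performed. The difficulty is that when $\pi(\zeta_c)$ and $\pi(\zeta_{c'})$ overlap along a common subpath, the shared endpoint $D'$ given by Lemma~\ref{lma:no_crossing} may not be the ball whose center starts the overlapping segment. I would handle this with the Hanani--Tutte route, which only needs the parity property and no geometric surgery. That property holds because independent edges give vertex-disjoint walks, hence curves with disjoint tubular neighborhoods and zero crossings.
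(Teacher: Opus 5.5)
Your proof is correct and is essentially the paper's argument. You draw each $e_c$ along $\pi(\zeta_c)$ in a plane drawing of $G$, use Lemma~\ref{lma:no\_crossing} to confine all crossings to pairs of adjacent edges, and conclude planarity; your explicit simplicity check and your appeal to the strong Hanani--Tutte theorem supply the details the paper leaves as ``$G_D$ can be redrawn with these crossings removed.'' Commit to the Hanani--Tutte route rather than the swap surgery, because swapping the segments from $x_{D'}$ to $p$ does not preserve your invariant: if $e_c=(D',D_2)$, $e_{c'}=(D',D_4)$, and some edge $(D_4,W)$ with $W\notin\{D',D_2\}$ crosses $\zeta_{c'}$ on the swapped segment, the redrawn $e_c$ inherits that crossing, and it is now between independent edges.
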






\begin{restatable}{lemma}{depthlinear}\label{lma:depth_3_linear}
        There are $O(|\cR|)$ cells of depth at most 3.
\end{restatable}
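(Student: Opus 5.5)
We bound the three depths separately. The only nontrivial case is depth $3$, which we handle by summing $|\resident(D)|$ over all balls $D$, using the planarity of each $G_D$ (Lemma~\ref{g_d_planar}) together with the sparsity of the support $\HH$ from Theorem~\ref{thm:dual_support}.

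\emph{Depths 1 and 2.} A depth-$1$ cell is determined by its single ball, so there are at most $|\cR|$ of them. For a depth-$2$ cell with hit set $\set{D_1,D_2}$, its representative $v_c$ is contained in both balls. By Lemma~\ref{lma:dual_is_a_support}, the two balls are then adjacent in $\HH$. Hence the number of depth-$2$ cells is at most $|E(\HH)|$. Since $G$ is planar, $\HH$ is planar by Theorem~\ref{thm:dual_support}, because the forbidden minors $K_5$ and $K_{3,3}$ have minimum degree at least two. So $|E(\HH)|\le 3|\cR|-6=O(|\cR|)$.

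\emph{Depth 3.} Every depth-$3$ cell $c$ gets exactly one base, so
\[
|\cell_3|=\sum_{D\in\cR}|\resident(D)|.
\]
Fix $D$. Each $c\in\resident(D)$ contributes the edge $e_c$ of $G_D$, whose endpoints are the two balls of $\hit(v_c)\setminus\set{D}$. The map $c\mapsto e_c$ is injective: if two cells gave the same edge $\set{D_1,D_2}$, both would have hit set $\set{D,D_1,D_2}$, and so they would be the same cell. By Lemma~\ref{lma:subwalk_new}, both endpoints of $e_c$ lie in $N_{\HH}(D)$. By Lemma~\ref{g_d_planar}, $G_D$ is a simple planar graph on at most $\deg_{\HH}(D)$ vertices. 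Therefore
\[
|\resident(D)|\le 3\deg_{\HH}(D).
\]
Summing over $D$ gives
\[
|\cell_3|\le 3\sum_D\deg_{\HH}(D)=6|E(\HH)|=O(|\cR|).
\]

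The main obstacle is the injectivity and simplicity claim, namely that $G_D$ really is simple, so that Euler's bound applies edge by edge. Distinct hit sets give distinct vertex pairs, which rules out parallel edges. Self-loops cannot occur because Lemma~\ref{lma:subwalk_new} places the two endpoints in distinct neighbors $D'\ne D''$. The remaining point to check is that the redrawing argument following Lemma~\ref{lma:no_crossing} produces a valid planar embedding of $G_D$. I would take this from Lemma~\ref{g_d_planar} as given.
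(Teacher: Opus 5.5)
Your proof is correct and follows the paper's own argument essentially step for step. Depth $1$ is trivial, depth-$2$ cells map injectively to edges of the planar support $\HH$, and depth-$3$ cells are bounded by $|\resident(D)|\le|E(G_D)|\le 3\deg_{\HH}(D)$ summed over $D$, which gives the same $6|E(\HH)|\le 18|\cR|$ bound.

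The one addition is that you spell out why $c\mapsto e_c$ is injective (so $G_D$ is simple), which the paper leaves implicit. Exactly as in the paper, the fact that both balls of $\hit(v_c)\setminus\set{D}$ lie in $N_{\HH}(D)$ is taken from Lemma~\ref{lma:subwalk_new}, so your argument is only as sound as that lemma. Its Case~1 tacitly assumes that $\SP(x_{\alpha(v_c)},v_c)$ passes from $F_{\alpha(v_c)}$ directly into $F_{\beta(v_c)}$.
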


\begin{proof}
    We first quickly argue that there is $O(n)$ cells of depth at most 2. By the definition of cell, there is at most $n$ cells of depth 1, one for each ball in $\cR$. By the existence of a planar dual support (\Cref{thm:dual_support}) on $\cR$ with respect to all vertices of $G$, we know that for every cell of depth 2 corresponding to vertices $v$ incident on balls $R$ and $R'$, the connectivity of $\cR_v$ requires that $R$ and $R'$ share an edge. Since the dual support is planar, there is at most $3 |\cR|-6$ such pairs, and thus at most $O(n)$ cells of depth 2.
    
    For depth-3 cells, 
    every edge of $G_D$ corresponds to a resident depth-3 cell $c$ of $D$, so $|\resident(D)| \leq |E(G_D)|$. By Lemma~\ref{g_d_planar}, $|\resident(D)| \leq 3 |V(G_D)|$. 
    By charging the depth-3 cells to the total number of edges across all $G_{D}$ for all $D \in \cR$ using the planarity of $G_D$,
    $|\resident(D)| \leq 3|N_H(D)|$. 
    Summing over all neighborhoods in the contracted Voronoi diagram of $\cR$ in $G$, we know the sum of degrees is at most $6 |\cR|$. 
    Thus,
    $|\cell_3| \leq 18 | \cR |$. 
\end{proof}


\bibliography{support}

\appendix

\section{Missing proofs}
\label{S:scaffold}


\subsection{Weighted $r$-Dominating Set: Proof of Theorem~\ref{thm:weighted-dom}}

\begin{theorem}\label{thm:constant_factor_set_cover_planar}
    Given a set of balls $\cR$ in a planar graph $G$, a subset of vertices $U \subseteq V(G)$, and a weight function $w : \cR \to \mathbb{R}_{\geq 0}$, an $O(1)$ approximation for the minimum weight coverage of $U$ by balls of $\cR$ can be computed in polynomial time.
\end{theorem}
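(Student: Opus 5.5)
We plan to reduce the statement to the quasi-uniform sampling framework of Chan, Grant, K\"onemann, and Sharpe~\cite{cgks-wcpgs-2012}. Their main theorem states that if every restriction of a set system has shallow cell complexity $f(m,k) \le m \cdot \poly(k)$, then the natural covering LP can be rounded in polynomial time with an $O(\log \phi)$-factor loss, where $\phi(m) = f(m,k)/m$ for constant $k$. Since our bound has the form $O(m k^2)$, this gives an $O(1)$-approximation. The plan is therefore to verify that the hypothesis applies to the set system $(U, \cR)$ and to every subsystem the algorithm considers.

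\textbf{Step 1 (the relevant set system).} Work with the set system $(U, \cR|_U)$, where each ball $R \in \cR$ is restricted to $U$. The CGKS framework needs shallow cell complexity bounds for every subfamily $\cR' \subseteq \cR$, restricted to $U$; they are applied to random subsamples, each with $m = |\cR'|$. A cell of $(U, \cR')$ is an equivalence class of points of $U$ with the same hit set. Each such cell is contained in a cell of $(V_G, \cR')$ with the same hit set, so the number of depth-$k$ cells can only decrease when we restrict to $U$. Theorem~\ref{thm:shallow_cell_planar} applies to $\cR'$ verbatim, since it is stated for an arbitrary system of balls on a planar graph, with arbitrary radii after the equal-radius reduction. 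Hence every subsystem has at most $O(|\cR'| \cdot k^2)$ cells of depth $k$.

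\textbf{Step 2 (invoking the rounding).} First solve the LP relaxation
\[
\min \sum_{R} w(R)\, x_R \quad \text{subject to} \quad \sum_{R \ni u} x_R \ge 1 \ \text{ for all } u \in U,\quad x \ge 0,
\]
in polynomial time; it has $|\cR| \le$ polynomial many variables and $|U|$ constraints. Then apply quasi-uniform sampling to round this fractional solution. Because $\phi(m,k) = O(k^2)$ is independent of $m$, the CGKS theorem gives an integral cover of weight $O(\mathrm{LP}) = O(\mathrm{OPT})$. Alternatively, one could cite Varadarajan's result~\cite{var-wgscq-2010} in its refined form. If $U$ is not coverable at all, this is detected by checking whether each $u \in U$ lies in some ball.

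\textbf{Main obstacle.} The delicate point is matching the exact form of the CGKS hypothesis. Their framework requires the bound for all subsystems, and in the dual formulation the relevant count is cells (elements with distinct hit sets), so depth is measured as the number of sets containing an element. This is precisely the primal count of Theorem~\ref{thm:shallow_cell_planar}. I would also check that the cubic-in-$r$ cell-complexity issue does not reenter through the restriction to $U$; it does not, since restriction never creates new cells. Finally, to obtain Theorem~\ref{thm:weighted-dom}, set $\cR = \{B(v,r) : v \in V_G\}$, $U = V_G$, and $w(B(v,r)) = w(v)$, then apply this theorem.
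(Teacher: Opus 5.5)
Your proposal is correct and follows the paper's route. The paper likewise just combines Theorem~\ref{thm:shallow_cell_planar} with the quasi-uniform sampling theorem of Chan et al.; your extra checks (that the bound holds for every subfamily, that restricting to $U$ can only merge cells, and that the LP is solved first) are details the paper leaves implicit. One small point: the Chan et al.\ hypothesis counts cells of depth \emph{at most} $k$, so you should sum the depth-$j$ bound over $j \le k$. This gives $O(|\cR'| \cdot k^3)$, as the paper states, which is still linear in $|\cR'|$ times a polynomial in $k$.
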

\begin{proof}
    Theorem \ref{thm:shallow_cell_planar} implies the balls in $\cR$ in $G$ have $O(n k^3)$ shallow-cell complexity.
    Chan \etal~\cite{cg-spfge-2015} showed that for any set system with shallow cell complexity $O(n \poly k)$, an $O(1)$-approximation for weighted set cover can be computed in polynomial time.
\end{proof}

Theorem~\ref{thm:constant_factor_set_cover_planar} can now directly be applied to the $r$-dominating set problem.

\begin{proof}[Proof of Theorem~\ref{thm:weighted-dom}]
    Consider the set of distance balls $\cR = \{ B(v,r) : v \in V_G \}$. If a vertex $v'$ is within distance $r$ of $v$, then $v'$ is within the ball of $B(v,r) \in \cR$. 
    This implies that the set cover instance $(V_G, \cR)$ is exactly equivalent to $r$-dominating set in $G$.
    Therefore, if we apply the algorithm from Theorem~\ref{thm:constant_factor_set_cover_planar} to $(V_G, \cR)$, then we will get an $O(1)$-approximation for $r$-dominating set in polynomial time.
\end{proof}

\section{Existence of Intersection Support}\label{sec:intersection}

We now shift focus to a more general notion of support. 
For other more complicated applications, sometimes we need to work with set systems that have two different types of sets.
We define an \EMPH{intersection system} $(U, \cR, \cB)$ as a set system where the elements are the sets of $\cR$ and the family of sets is $\cB$, and we consider a set $R\in \cR$ to be a ``member'' of set $B \in \cB$ if there is at least one element of $U$ is in the intersection $R \cap B$. 
We define an \EMPH{intersection support} to be a support graph of an intersection system $(U,\cR,\cB)$. In other words, an \emph{intersection support} is an undirected graph \EMPH{$\HH(\cR, \cB)$} where $\cR$ and $\cB$ are two families of subsets of some universe $U$ and $\HH(\cR,\cB)$ is a graph on $\cR$ such that for every set $B \in \cB$, all sets in $\cR$ that contain a common element with $B$ form a connected subgraph of $\HH(\cR,\cB)$. 

\begin{figure}[h!]
    \centering
    \includegraphics[page=2, width=0.7\linewidth]{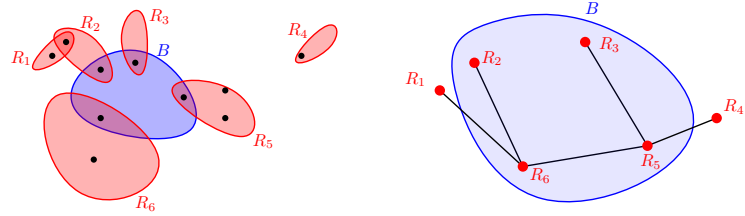}
    \caption{\internallinenumbers
    An intersection system with an intersection support. The red regions intersecting blue set $B$ form a connected subgraph in the~support.
    }
    \label{fig:inter_support_def}
\end{figure}

Our notion of \emph{intersection support} is equivalent to Raman and Ray's notion of planar support for intersection hypergraphs~\cite{RamanRay2020}, except we do not require that our support graphs be planar.
In our case, we are looking at the situation where $\cR$ and $\cB$ are both sets of balls that lie in some underlying graph $G$. It is important to note that this is a generalization of both dual and primal support for the case of balls, since we can always consider the case where $\cR$ is a set of radius-0 balls on every vertex to get a primal support and the case where $\cB$ is the set of radius-0 balls on every vertex to get a dual support. 

\paragraph{Undirected case.}
We reuse the same construction of $G'$, but in this case adding nodes for both $\cR$ and $\cB$.
It is easy to show that if $G$ (and our augmentation $G'$) is undirected, then the above trick of contracting the Voronoi partition with respect to $\cR$ works: Consider each $\cR$ node to be the center of a ball with radius $2r_{\max}$. If a ball $R$ from $\cR$ and $B$ from $\cB$ share a vertex in $G$, then the ball centered at $x_R$ in $G'$ of radius $2r_{\max}$ must contain $x_B$. 
By Lemma \ref{lma:dual_is_a_support}, we know that all $R' \in \cR$ containing $x_B$ must induce a connected subgraph in $\HH$, satisfying the definition of intersection support.

\medskip
In this section we show how to extend the technique to obtain an intersection support which also preserves forbidden minors even when $G$ is directed.

\begin{restatable}[Intersection Support]{theorem}{intersectionsupport}\label{thm:intersection_support}
    Any system of balls $(\cR,\cB)$ on an underlying graph $G$ has an intersection support $\HH(\cR,\cB)$ such that if $G$ does not contain a (contraction-)minor $K$ with minimum vertex degree two, then neither does $\HH(\cR,\cB)$.
\end{restatable}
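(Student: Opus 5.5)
The plan is to build an augmented digraph in which balls of $\cB$ play the role of vertices, so that the dual-support construction of \Cref{thm:dual_support} applies directly. As in \Cref{sec:dual}, form $G'$ by attaching, for each $R\in\cR$, a node $x_R$ with a single outgoing edge to $c_R$ of length $r_{\max}-r_R$. Here $r_{\max}$ is the maximum radius over $\cR$, so every $R$ becomes a ball of radius $r_{\max}$ centered at $x_R$. The difficulty in the directed case is that we cannot simply send $B$ ``backwards'' to its center. Membership of $R$ in $B$ means that some vertex $u$ has $d(x_R,u)\le r_{\max}$ and $d(c_B,u)\le r_B$. These two distances both point into $u$, so no single ball around $x_B$ captures the condition.

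To fix this, add for each $B\in\cB$ a node $y_B$ and, for every vertex $u\in B$, a directed edge $u\to y_B$ of length $0$ (after perturbation, a tiny $\varepsilon$). Then $y_B$ lies in the ball of $x_R$ of radius $r_{\max}$ (up to $\varepsilon$) if and only if $R\cap B\neq\emptyset$.

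Next I restrict to vertices lying in some $R$, together with the nodes $y_B$, and perturb so that Voronoi assignments are unique. I then take the Voronoi partition of this graph with respect to $\{x_R\}$. The cells are connected via shortest paths, since the new edges only enter $y_B$, which is a sink. Contracting the cells gives $\HH(\cR,\cB)$. Applying \Cref{lma:dual_is_a_support} verbatim to the vertex $y_B$ shows that the balls containing $y_B$, which are exactly the $R$ meeting $B$, induce a connected subgraph. Its proof uses only shortest paths ending at the vertex in question and the triangle inequality, both of which hold in the augmented digraph.

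The main obstacle is minor preservation. The node $y_B$ has high degree, so adding it may create new minors, and the degree-one argument from the proof of \Cref{thm:dual_support} does not apply. My plan is to avoid contracting anything through $y_B$. Each $y_B$ is a sink, so it is never an interior vertex of a shortest path, and its Voronoi cell membership is just a leaf attached to some $u\in B$. Therefore I will first delete all $y_B$ nodes before contracting. The cells restricted to original vertices and the $x_R$'s remain connected, because shortest paths from $x_R$ to other vertices never pass through sinks. The contracted graph is then exactly the dual support from \Cref{thm:dual_support}, which is a minor of $G$ (modulo the degree-one $x_R$ argument).

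It remains to check that deleting $y_B$ keeps the support property. Suppose the balls containing $y_B$ are $R_1,\dots,R_m$, and consider the shortest paths from $x_{R_i}$ to $y_B$. Each passes through its last vertex $u_i\in B$, and its prefix lies in cells of balls containing $y_B$ (by the lemma's argument applied to the prefix). The remaining work is to connect the different $u_i$'s within $B$ using only cells of balls meeting $B$. I will try to do this along shortest paths from $c_B$ inside $B$. Each vertex on such a path lies in $B$, so any $R$ owning it meets $B$, while the region around $c_B$ is covered by cells. To make this work, I expect to apply the \Cref{lma:overtaking}-style argument to the paths from $c_B$, so as to ensure every vertex on them lies in some $R$-cell. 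This is the step I expect to need care, possibly by also adding $c_B$-rooted shortest path trees to $G'$.
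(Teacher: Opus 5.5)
There is a genuine gap, and it is exactly the step you flagged. Once the $y_B$ nodes are deleted, the graph you contract is the $G'$ of Section~\ref{sec:dual} built from $\cR$ alone, because you had already discarded every vertex lying in no ball of $\cR$. So your final $\HH(\cR,\cB)$ is just the dual support of $\cR$, and it does not depend on $\cB$ at all. Such a graph cannot be an intersection support, even in undirected graphs. Take the path $u_1 w u_2$ with unit lengths, let $\cR=\{\{u_1\},\{u_2\}\}$ (radius-$0$ balls), and let $\cB=\{B\}$ with $B$ the radius-$1$ ball at $w$. Both balls of $\cR$ meet $B$. But $w$ lies in no ball of $\cR$, so it is deleted, and the two support nodes are isolated.

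Your proposed routing along shortest paths from $c_B$ cannot repair this. Vertices on those paths, including $c_B$ itself, need not lie in any ball of $\cR$, so they belong to no $R$-cell. Lemma~\ref{lma:overtaking} only constrains vertices on $\SP(x_D,\cdot)$ for centers $x_D$ of $\cR$, so it says nothing about them. In particular, the claim that ``the region around $c_B$ is covered by cells'' is false. The $y_B$ gadget only trades one failure for another: keeping $y_B$ makes it an apex over $B$, and deleting it discards precisely the connectivity it encoded.

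The missing idea is to keep the vertices lying in some ball of $\cB$ but in no ball of $\cR$ (the paper's blue vertices). Each must be absorbed into an $\cR$-cell by a rule that is realized by contractions and lands in $\cR_B$. Assigning them to the nearest $x_R$ does not work, for two reasons:
\begin{itemize}
\item In a digraph they can be unreachable from every $x_R$. For example, take the digraph with edges $c_B\to w_i\to u_i$ for $i=1,2$ and $R_i=\{u_i\}$.
\item Being close to a center at distance beyond $r_{\max}$ certifies nothing about meeting $B$.
\end{itemize}

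The paper proceeds as follows:
\begin{itemize}
\item Build $G'$ for $\cR\cup\cB$. All added nodes have degree one, so minor preservation follows verbatim from Theorem~\ref{thm:dual_support}.
\item Contract the $\cR$-Voronoi cells of the red vertices.
\item \emph{Reverse every edge}, take a second Voronoi partition from the red nodes, and contract its cells.
\end{itemize}
A blue vertex $v'$ thereby joins the owner of the red vertex $u'$ minimizing $d(v',u')$. Suppose $v'\in\SP(x_B,v)$ with $v\in R'\cap B$. Then $d(x_B,u')\le d(x_B,v')+d(v',u')\le d(x_B,v')+d(v',v)=d(x_B,v)\le r_{\max}$. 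Hence $u'\in B$, and its owner lies in $\cR_B$. The walk along $\SP(x_{R'},v)$ and back along $\SP(x_B,v)$ therefore stays within cells of $\cR_B$, which gives the required connectivity.
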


\subsection{Constructing an intersection support}

To construct an intersection support for a directed graph $G$, we first construct $G'$, using $\cR \cup \cB$ as our balls for the node set. This means that we create a node for each $D \in \cR \cup \cB$ and delete all vertices that are not contained in any such $D$. We will additionally label each vertex (and node) with whether it is reachable from a ball in $\cR$ (\textbf{\EMPH{red}}) or if it is \textit{only} reachable from balls in $\cB$ (\textbf{\textcolor{Blue}{\emph{blue}}}). Considering only the red vertices at first, we apply the same procedure of constructing the Voronoi partition with respect to $\cR$ and contracting each cell. We refer to this graph as $\HH$. $\HH$ should now only contain blue vertices, blue nodes and red nodes. 
To define edge weights of $\HH$, we will say that edges between two red nodes have infinite weight, whereas all other edges inherit the minimum weight whenever two edges merge due to an edge contraction. 
From here, we flip the orientations of every edge in $H$, including those incident on nodes. We then construct a Voronoi partition with respect to the red nodes $\cR$, and contract every cell, in this case yielding a graph \EMPH{$H'$} only on the red nodes.

We first show that $H'$ is a support graph, as was done with Lemma \ref{lma:dual_is_a_support} for the dual system.

\begin{lemma}\label{lma:intersection_is_a_support}
     For every blue ball $B \in \cB$ in a system of balls $(\cR,\cB)$ on an underlying graph $G$, the set of red balls $\cR_B \subseteq \cR$ that intersects $B$ nontrivially
     must induce a connected subgraph in $H'$.
\end{lemma}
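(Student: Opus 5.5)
We mirror the proof of Lemma~\ref{lma:dual_is_a_support}, applied twice: once in $G'$ to the forward Voronoi partition with respect to $\cR$, and once in the reversed graph $\HH$ to the second Voronoi partition. The target is a \emph{witness walk}: for a fixed blue ball $B$ and any $R \in \cR_B$, we want a walk from the red node $x_R$ to the blue node $x_B$, taken in the reversed orientation of $\HH$, such that every vertex on it lies in a second-stage Voronoi cell of some ball of $\cR_B$. Contracting these cells then shows that every $R \in \cR_B$ is joined in $H'$ to the cell containing $x_B$ through nodes of $\cR_B$ only. Hence $\cR_B$ induces a connected subgraph.

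\textbf{Step 1 (a single distance function in $\HH$).} First record what a distance in $\HH$ means. Take a red node $x_R$ and a vertex $u$ of $G'$, possibly blue. A directed path in $G'$ from the $\cR$-centers to $u$ leaves the red region once and then runs through blue vertices. After contraction, the red prefix is absorbed into a red node, namely the Voronoi cell containing the last red vertex. The infinite weight on red--red edges ensures that no shortest path in $\HH$ can hop between red nodes. So the $\HH$-distance from a red node $x_R$ to a blue vertex $u$ should equal
\[
\min_{w\in F_R,\ w \text{ red}} \bigl(d_{G'}(w,u)\bigr) + \text{(edge weight)},
\]
which is a ``distance from the cell $F_R$'' and behaves like a metric from a set. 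Since $B$ intersects $R$, some vertex $y \in R\cap B$ exists. The vertex $y$ is red, and by Lemma~\ref{lma:dual_is_a_support} the balls of $\cR$ containing $y$ are connected in $\HH$; all of them lie in $\cR_B$. So it is enough to connect the cell of $y$ to $x_B$, and then connect all the red cells met along the way.

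\textbf{Step 2 (walking from $y$ to $x_B$).} Follow the shortest path $\SP(x_B,y)$ in $G'$ backwards. Reversed, it runs from $y$ to $x_B$ in $\HH$. Each red vertex $w$ on this path lies within distance $r_{\max}$ of $x_B$ along the path suffix. Its Voronoi owner $R'$ satisfies $w\in R'$, and hence $R' \in \cR_B$. So every red cell the path touches corresponds to a ball in $\cR_B$. Each blue vertex $u$ on the path is assigned in the second partition to some red node $Q$ minimizing reversed distance. Repeat the argument of Lemma~\ref{lma:dual_is_a_support} with Lemma~\ref{lma:overtaking}: if $Q$ were closer to $u$ than every red node in $\cR_B$ seen so far, then continuing along the path from $u$ to $x_B$ would give a red vertex of $F_Q$ within distance $r_{\max}$ of $x_B$, through the original orientation. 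That would force $Q\cap B\neq\emptyset$, i.e.\ $Q\in\cR_B$. Consequently every vertex on the walk lies in a second-stage cell of $\cR_B$, and consecutive cells along the walk are adjacent in $H'$.

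\textbf{Main obstacle.} The delicate part is Step~2's triangle inequality after reversal. The second Voronoi partition uses distances \emph{from} red nodes in the reversed $\HH$, which correspond to distances \emph{to} red cells in $G'$. Meanwhile, ball membership of $B$ is measured \emph{from} $x_B$. I would first carefully verify that ``$u$ is closer to $Q$ than to $R$ in reversed $\HH$'' combined with ``$u$ lies on $\SP(x_B,y)$ with $y\in R$'' yields $d_{G'}(x_B, w)\le r_{\max}$ for some red $w\in F_Q$. The concatenation here is the forward path $x_B \to u$ followed by the reversed-closest path $u \to F_Q$. This concatenation goes in the wrong direction for a direct triangle inequality. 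If it fails, I would redefine the second partition's distance to be $d(x_B$-side$)$-compatible, using reversed orientation so that the path $u\leadsto F_Q$ reads as $F_Q \leadsto u$ in $G'$. I would then check that this orientation is exactly what the edge flip accomplishes.
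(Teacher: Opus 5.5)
There is a genuine gap, and it sits in the blue-vertex case, which is the real content of the lemma.

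Your skeleton matches the paper's. You anchor at the second-stage cell containing $x_B$. You reach that cell from each $R\in\cR_B$ via $\SP(x_R,y)$, then along $\SP(x_B,y)$ backwards, for some $y\in R\cap B$. You handle red vertices on $\SP(x_B,y)$ by noting that their first-stage owner contains them.

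The trouble starts in Step~1. There you describe the distance from $x_R$ to $u$ as a distance \emph{from} the cell $F_R$ \emph{to} $u$, which is the distance in the unflipped $\HH$. The second partition is taken after every edge is reversed. So the distance from $x_Q$ to a blue vertex $u$ is the forward $G'$-distance from $u$ \emph{to} $F_Q$, through blue vertices. In other words, the second-stage owner of $u$ is the first-stage owner of the red vertex $w$ nearest to $u$ in the forward direction.

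Because of this, your Step~2 looks in the wrong place. Neither Lemma~\ref{lma:overtaking} nor ``continuing along the path from $u$ to $x_B$'' says anything about $F_Q$. Your last paragraph then claims that $x_B\to u\to F_Q$ is the wrong direction for the triangle inequality. It is not: both pieces are forward paths in $G'$. Your fallback, reading $u\to F_Q$ as $F_Q\to u$, is exactly the direction in which the argument breaks. It would also change the construction you are supposed to analyze.

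The missing step is short. Let $w$ be the red vertex nearest to $u$ in the forward direction, so $w\in F_Q$. Since $y$ is red and lies after $u$ on $\SP(x_B,y)$, we have $d(u,w)\le d(u,y)$. Hence
\[
d(x_B,w)\le d(x_B,u)+d(u,w)\le d(x_B,u)+d(u,y)=d(x_B,y)\le r_{\max},
\]
so $w\in B$. Since $w$ is red, it lies in some ball of $\cR$. Its nearest red center $x_Q$ is therefore within $r_{\max}$, so $w\in Q$ and $Q\in\cR_B$.

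The comparison is with the red endpoint $y$ downstream of $u$, not with red nodes ``seen so far.'' Applying the same argument with $u=x_B$ (a blue node) shows that the anchor cell itself belongs to $\cR_B$, which your proposal leaves implicit.
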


\begin{proof}
    Assume that $|\cR_B| \geq 2$, as otherwise this holds trivially.
    Consider the closest red vertex $u$ to $B$ (smallest distance $d(x_B,u)$). By the construction of $\HH$, $u$ was contracted into some red node $R \in \cR$. In fact, $R \in \cR_B$, $|\cR_B| = 0$ by the definition of a red vertex. 
    Since we always choose the minimum weight edge when merging, we know that $x_B$ must be in the Voronoi cell of $R$ in $\HH$. It suffices to show that the remaining $R' \in \cR_B \setminus\{R\}$ must have a path to $R$ within $H'[\cR_B]$.

    Consider a vertex $v$ in $R' \cap B$, and the shortest paths $\SP(x_{R'},v)$ and $\SP(x_B,v)$. We know that every vertex $v'$ in $\SP(x_{R'},v)$ must be in the $G'$ Voronoi cell of $R'$. We claim that every vertex $v'$ in $\SP(x_B,v)$ is either a red vertex in the $G'$ Voronoi cell of a ball in $\cR_B$ or a blue vertex in the $\HH$ Voronoi cell of a ball in $\cR_B$. This lets use the union of $\SP(x_{R'},v)$ and $\SP(x_B,v)$ to trace a path in $H'$ from $R'$ to $R$.

 If $v'$ is in $\SP(x_B',v)$, then, by definition of shortest path, $d(x_B,v') \leq d(x_B,v)$ and thus $v'$ is in $B$. With this in mind, we can split into cases based on whether $v'$ is a red vertex or a blue vertex.

    \begin{itemize}

\begin{figure}[ht]
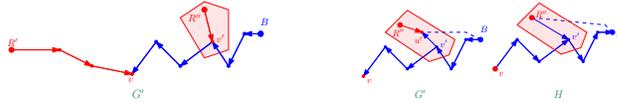

    \centering 
    \includegraphics[width=0.25\linewidth,page=5]{voronoi.pdf}
\hspace{.07\linewidth}
\includegraphics[width=0.25\linewidth,page=6]{voronoi.pdf}
    \caption{
    Left: a red vertex $v'$ of $\SP(c_B,v)$ must lie in the Voronoi cell of a red ball $R'$ intersecting with $B$. Right: A blue vertex $v'$ of $\SP(c_B,v)$ must lie in the $\HH$ Voronoi cell of $R'$ intersecting with $B$.}
    \label{fig:intersection_lemma}
\end{figure}

\item If $v'$ is red then $v'$ must be in the Voronoi cell (in $G'$) of some red ball $R''$ that contains $v'$. Since $v'$ is also in $B$, this means that $R'' \cap B \neq 0$, which implies that $R'' \in \cR_B$.

\item   
    If $v'$ is blue, there must be some red vertex $u'$ which is of the smallest distance, $d(v',u')$, among all red vertices. Since $v$ is a red vertex, we know that $d(v',u') \leq d(v',v)$, hence:
    \[
    d(x_B, u') \leq d(x_B,v')+d(v',u') \leq  d(x_B,v') + d(v',v)  = d(x_B,v) \leq r_{\max}
    \]
    This implies that $u'$ must be in $B$.
    
    Since $u'$ is red, $u'$ must be in some Voronoi cell of a red ball $R''$ (in $G'$). After the orientation flip and the construction of $\HH$, $u'$ is contracted into $x_{R''}$, and thus $x_{R''}$ must be the closest red node to $v'$. This means $v'$ is in the $R''$ Voronoi cell in $\HH$. Since $u' \in R'' \cap B$, $R'' \in \cR_B$, thus every vertex in $\SP(x_B,v)$ is contracted into a red node which is in $\cR_B$.
    \end{itemize}

    This means that we can simply perform a walk from $R$ to $R'$ in $H'$ by going to the red ball corresponding to each vertex along $\SP(x_{R'},v)$ and then backwards along $\SP(x_B,v)$. Thus, $\cR_B$ must form a connected subgraph in $H'$.
\end{proof}


\begin{proof}[Proof of Theorem \ref{thm:intersection_support}]

    Construct $H'$ described above as our intersection support. 
    By Lemma \ref{lma:intersection_is_a_support}, we know that $H'$ satisfies the definition of our intersection support, so it remains to show that $H'$ contains no (contraction)-minor with minimum degree two that is not also a minor of $G$.
    The construction of $H'$ begins with the construction of a graph $G'$ almost equivalent to the one from Theorem \ref{thm:dual_support}, but for set system $(G, \cR \cup \cB)$. 
    From the proof of Theorem \ref{thm:dual_support}, we know that this $G'$ contains no minor $K$ of minimum degree two. 
    Since we only perform contractions and vertex deletions after constructing $G'$, we know that $H'$ is a minor of $G'$, and thus also contains no minor $K$.
\end{proof}

To conclude, any system of balls admit an intersection support which contains the same forbidden minors as the underlying graph. Both the dual and intersection supports can be computed in polynomial time. 
The primary bottleneck is the construction of the Voronoi partitions, which Erwig~\cite{erwig2000graph} showed to be computable in $O(m + n \log n)$ time for graphs with $m$ edges and $n$ vertices.

\end{document}